\documentclass[11pt]{article}
\usepackage[left=1in, right=1in, top=1in, bottom=1in, margin=1in]{geometry}
\usepackage{latexsym,amssymb,amsfonts,amsmath,mathrsfs,hyperref,amsthm,thmtools,thm-restate}

\usepackage{tikz}
\usepackage{pifont}%
\usepackage{enumerate}
\hypersetup{colorlinks,citecolor=NavyBlue,filecolor=black,linkcolor=black,urlcolor=black}
\usepackage[dvipsnames]{xcolor}
\usepackage{comment} 
\usepackage[capitalize]{cleveref}

\newcommand{\vertiii}[1]{{\left\vert\kern-0.25ex\left\vert\kern-0.25ex\left\vert #1 
		\right\vert\kern-0.25ex\right\vert\kern-0.25ex\right\vert}}

  \DeclareMathOperator{\var}{Var}
\DeclareMathOperator{\maxinf}{MaxInf}
  \renewcommand{\Pr}{\mbox{\rm Pr}}
  
  \newcommand{\E}{\mathbb{E}}

  \DeclareMathOperator{\cb}{cb}
    
    \DeclareMathOperator{\op}{op}

  \newcommand{\norm}[1]{\|#1\|}

  \newcommand{\R}{\mathbb{R}} 
  \newcommand{\N}{\mathbb{N}} 

  \newcommand{\Id}{\ensuremath{\mathop{\rm Id}\nolimits}}

 \newcommand\Inf{\mathrm{Inf}}

  \newcommand{\eps}{\varepsilon}

  \DeclareMathOperator{\sign}{sign}

  \DeclareMathOperator{\Tr}{Tr}
  
  \DeclareMathOperator{\Diag}{Diag}

  \newcommand{\ket}[1]{|#1\rangle}

  \newcommand{\ind}[1]{\mathbf{#1}}
   
  \newcommand{\beq}{\begin{equation}}
  \newcommand{\eeq}{\end{equation}}
  \newcommand{\beqn}{\begin{equation*}}
  \newcommand{\eeqn}{\end{equation*}}
  \newcommand{\beqr}{\begin{eqnarray}}
  \newcommand{\eeqr}{\end{eqnarray}}
  \newcommand{\beqrn}{\begin{eqnarray*}}
  \newcommand{\eeqrn}{\end{eqnarray*}}
  \newcommand{\bmline}{\begin{multline}}
  \newcommand{\emline}{\end{multline}}
  \newcommand{\bmlinen}{\begin{multline*}}
  \newcommand{\emlinen}{\end{multline*}}

  \theoremstyle{plain}
  \newtheorem{theorem}{Theorem}[section]
  \newtheorem{lemma}[theorem]{Lemma}
  \newtheorem{fact}[theorem]{Fact}
  \newtheorem{proposition}[theorem]{Proposition}
  \newtheorem{claim}[theorem]{Claim}
  \newtheorem{corollary}[theorem]{Corollary}
  
  \theoremstyle{definition}

  \newtheorem{conjecture}[theorem]{Conjecture}
  
  \theoremstyle{remark}
  \newtheorem{remark}[theorem]{Remark}
  
  \renewenvironment{proof}[1][]{
    	\begin{trivlist}
     	\item[\hspace{\labelsep}{\em\noindent Proof#1:\/}]}
     	{{\hfill$\Box$}
    	\end{trivlist}
  }
  \newenvironment{claimproof}[1][]{
  		\begin{trivlist}
   		\item[\hspace{\labelsep}{\sc\noindent Proof #1:\/}]}
   		{{\hfill$\blacklozenge$}
  		\end{trivlist}
  }
\title{
Optimal inequalities for completely bounded polynomials\\ and the limitations of quantum query algorithms
} 

\date{\vspace{-1cm}}

\author{Francisco Escudero Guti\'errez\\ \texttt{Quriosity, Inria, France} \\ \texttt{fescuder@zimbra.inria.fr}
\and
Miquel Saucedo\\ \texttt{Centre de Recerca Matemàtica, Spain} \\ \texttt{msaucedo@crm.cat}
\and
Carlos Palazuelos\\ \texttt{Universidad Complutense de Madrid, Spain} \\ \texttt{cpalazue@ucm.es}}
\begin{document}

\maketitle

\begin{abstract}
    We consider the problem of establishing limitations on the power of quantum query algorithms. 
    We do so via a refinement of the polynomial method due to Arunachalam, Briët and Palazuelos (SICOMP, 2019), who showed that quantum query algorithms are completely bounded polynomials. Motivated by that, we prove several optimal functional inequalities involving different notions of completely bounded polynomials. These inequalities lead to limiting theorems for the power of quantum query algorithms that improve on several prior works. In particular, we prove the following.

    \textbf{1. An optimal root-influence bound  for block-multilinear polynomials.} Prior work showed that block-multilinear polynomials $p$ of degree $t$ satisfy a root-influence bound, $\norm{p}_{\cb}\geq \sum_i \sqrt{\Inf_i[p]}/t^2$, which is stronger than the bound appearing in the Aaronson--Ambainis conjecture. We improve the constant in that inequality, proving its optimal version: $\norm{p}_{\cb}\geq \sum_i \sqrt{\Inf_i[p]}/t$. Since the amplitudes of quantum algorithms that query disjoint blocks of inputs—such as $t$-fold forrelation— are block-multilinear polynomials with  $\norm{p}_{\cb}\leq 1,$ our inequality shows that they satisfy $t\geq \sum_i\sqrt{\Inf_i[p]}$. We prove that this stronger inequality yields both a more efficient classical simulation for these algorithms than prior results based on the Aaronson–Ambainis argument, and a qualitative improvement: all classical queries can be made in a non-adaptive manner.

    \textbf{2. Optimal Fourier growth of the highest level of quantum query algorithms.} We show that for every polynomial $p$ defined on $\{-1,1\}^n$ of degree $2t$, the Fourier Growth at the level $2t,$ namely $\norm{\widehat p_{2t}}_{\ell_1}$, satisfies $\norm{\widehat p_{2t}}_{\ell_1}\leq (en/(2t-1))^{\frac{2t-1}{2}}\norm{p}_{\cb}$. This is optimal up to the factor $e$, as witnessed by $2t$-fold forrelation. As the acceptance probabilities of quantum query algorithms that make $t$ queries (to the whole input) satisfy $\norm{p}_{\cb}\leq 1$, this yields a Fourier growth bound for these algorithms, partially resolving a question by Girish~(STOC, 2026).

\end{abstract}

\section{Introduction}
Since the early days of quantum computing, query complexity has served as a rigorous model to compare the power of classical and quantum computers \cite{Ambainis:2018}. In this model, the unit of computational complexity is a query, which is a call to an oracle that reveals a unit of information about the input. More precisely, in the case of Boolean functions $f:D\subseteq \{-1,1\}^n\to\{-1,1\}$, query algorithms have access to a full description of $f$ and are requested to compute $f$ on an unknown input $x\in D$. They can access this input via oracle calls. In the classical setting, each oracle call reveals an entry $i$ of the input, while in the quantum case an oracle is a call to (the controlled version of) the unitary that maps $\ket{i}$ to $x_i\ket{i}$. Many celebrated quantum algorithms show an advantage in terms of query complexity, for example in unstructured search \cite{Grover:1996}, period finding \cite{Shor:1997}, Simon’s problem \cite{Simon}, NAND-tree evaluation \cite{farhi2007quantum} and element distinctness \cite{ambainis2007quantum}. 

Studying the power of query algorithms to compute Boolean functions $f:D\subseteq \{-1,1\}^n\to \{-1,1\}$ has been closely linked to Fourier analysis. Since the seminal work of Beals, Buhrman, Cleve, Mosca and de Wolf, it has been known that the acceptance probability (the probability of outputting 1) of a quantum algorithm that makes $t$ queries is a polynomial $p:\{-1,1\}^n\to\R$ of degree at most $2t$ in its Fourier expansion \cite{beals2001quantum}. Furthermore, such polynomials are also bounded in the infinity norm, i.e. they satisfy $\norm{p}_\infty=\sup_{x\in\{-1,1\}^n}|p(x)|\leq 1.$ This structural fact 
yields the quantum polynomial method to lower bound quantum query complexity. Indeed, given a function $f:D\subseteq \{-1,1\}^n\to\{-1,1\}$, assume that one can show that every polynomial $p:\{-1,1\}^n\to [-1,1]$ that approximates $f$ up to error $1/3$ needs to have degree at least $t.$ Then, by the remark of Beals et al.\, any quantum algorithm that guesses the value of $f$ with probability $\geq 1/3$ must make at least $t/2$ queries. Throughout the years, this strategy has led to several breakthroughs~\cite{aaronson2004quantum,BunKhotariThaler:2020,mande2020improved,bun2023approximate}.

\vspace{0.3cm}
\noindent\textbf{Aaronson--Ambainis conjecture.} In addition, the polynomial method has been used to establish other kinds of limitations of quantum query algorithms, such as in the Aaronson--Ambainis conjecture \cite{Aaronsons:2014}, which aims to delineate the landscape of exponential quantum speedups for Boolean functions. It is widely believed that to achieve superpolynomial quantum speedups for computing Boolean functions, the domain $D$ of the function must be very structured. More precisely, the community expects that, to obtain such speedups, the domain must satisfy $|D|=o(2^n),$ in other words,  the input $x$ needs to be promised not to be an arbitrary element in $\{-1,1\}^n$. This fact can be exploited by quantum computers better than by classical ones (see, for instance, the Deutsch-Jozsa problem \cite{deutsch1992rapid}). Such a belief is formalized in the following folklore conjecture dating from the 90s. 

\begin{conjecture}[Simulation conjecture]\label{con:needforstructure}
	The acceptance probability of $t$-query quantum algorithms can be simulated with error at most $\eps$ on at least a ($1-\delta$)-fraction of the inputs using poly($t,1/\eps,1/\delta$) classical queries.
\end{conjecture}

\cref{con:needforstructure} implies that exponential quantum speedups need structure. Indeed, a quantum query algorithm is naturally defined on the whole Boolean cube $\{-1,1\}^n$, and if \cref{con:AAconjecture} is true, then it can be efficiently classically simulated on a large part of $\{-1,1\}^n$. As a result, for such an algorithm to achieve an exponential speedup when computing a partial function $f:D\to \{-1,1\}$, its domain $D$ must have a low overlap with the subset where the classical simulation succeeds, and thus $D$ must be small. 

Motivated by the lack of exponential quantum speedups without structure and by a result of Fourier analysis \cite{dinur2006fourier}, Aaronson and Ambainis posed a functional analytic conjecture that implies \cref{con:needforstructure} \cite{Aaronsons:2014}. To state the conjecture, we must recall the definition of the influences and variance of a polynomial $p:\{-1,1\}^n\to \R.$ Given such a $p,$ its variance is given by $\var [p]=\E_{x}[(p(x)-\E_x[p(x)])^2],$ where this expectation, and all to come, are taken with respect to the uniform measure on the Boolean cube. The influence of the $i$-th variable is $\Inf_i[p]=\E_{x}[(p(x)-p(x^{\oplus i}))^2/4],$ where $x^{\oplus i}$ is the element of the Boolean cube obtained by flipping the $i$-th entry of~$x.$ Finally, the maximum influence is $\maxinf[p]=\max_{i\in [n]}\Inf_i[p].$ Both quantities can be expressed in terms of the Fourier coefficients appearing in the Fourier expansion of $p=\sum_{S\subseteq [n]}\widehat p(S)\chi_S,$ where $\chi_S(x)=\prod_{i\in S}x_i.$ Indeed, one can check that $\Inf_i[p]=\sum_{i\ni S}|\widehat p(S)|^2,$ and $\var[p]=\sum_{\emptyset \neq S\subseteq [n]}|\widehat p(S)|^2.$ Now, we are ready to state the Aaronson--Ambainis conjecture.

\begin{conjecture}[Aaronson--Ambainis]\label{con:AAconjecture}
	Let $p:\{-1,1\}^n\to \mathbb{R}$ be a polynomial of degree at most $d$ with $\norm{p}_\infty\leq 1$. Then, $p$ has a variable with influence at least poly($\var[p],1/d$).
\end{conjecture}

The argument of \cite[Theorem 22]{Aaronsons:2014} to show that \cref{con:AAconjecture} implies \cref{con:needforstructure} works as follows. Let $p$ be the bounded polynomial of degree at most $2t$ that represents the acceptance probability of a $t$-query quantum algorithm, and suppose that we want to approximate $p(y)$ for an unknown $y\in \{-1,1\}^n$. First, query an influential variable $i$ of $p$. Then, the restricted polynomial $p|_{x(i)=y(i)}$ is also a bounded polynomial of degree at most $2t$, so we can query again an influential variable. Assuming that \cref{con:AAconjecture} is true, the influences of these variables are large. As a consequence, after a \emph{small} number of queries the remaining polynomial must have a low variance, so its expectation is close to $p(y)$ with high probability. 

Despite several efforts by experts in the fields of quantum query complexity and Fourier analysis, the Aaronson--Ambainis conjecture remains wide open and it is only known to hold in particular cases \cite{montanaro2012some,o2015polynomial,defant2019fourier,lovett2022,Bansal:2022,gutierrez2023influences,bhattacharya2025random}. We refer to \cite{gutierrez2023influences} for a more detailed review of prior work on the Aaronson--Ambainis conjecture.

\vspace{0.3cm}
\noindent \textbf{Fourier growth.} Another Fourier analytic approach to show limits to the power of query algorithms via the polynomial method is the study of the Fourier growth. Given a polynomial $p:\{-1,1\}^n\to \R$, its Fourier growth at level $d$ is the $\ell_1$-norm of its Fourier coefficients of degree $d$, which we denote by $\norm{\widehat p_{d}}_{\ell_1}.$ Analyzing this quantity allows one to establish separations between different query models by $i)$ showing that for some function computable with few queries in one of the models the corresponding Fourier growth is large, and $ii)$ showing that the functions computed with the other model have low Fourier growth, unless many queries are made. Notably, this meta-strategy was introduced by Raz and Tal to show an oracle separation between BQP and PH \cite{raz2022oracle}. Subsequently, Fourier growth was used to show that $t$-fold forrelation achieves the best possible quantum-classical separation \cite{bansal2021k,girish2021fourier}. Furthermore, Fourier growth has been used to separate different models of quantum query algorithms, such as those with bounded rounds of adaptivity, or noisy models, and to show separations in communication complexity \cite{tal2020towards,girish2023fourier,girish2024power,girish2026fourier}. In addition, Fourier growth has also seen applications in the study of pseudorandomness~\cite{agrawal2020coin,chattopadhyay2018pseudorandom,chattopadhyay2019pseudorandom,chattopadhyay2020fractional}. For a comprehensive review on Fourier growth and query complexity, we refer to the PhD thesis of Wu~\cite{Wu:EECS-2025-39}.

\vspace{0.3cm}
\noindent\textbf{The completely bounded polynomial method.} The polynomial method was born as a framework to prove quantum query lower bounds~\cite{beals2001quantum}. Subsequently, a natural question arose: \emph{Can approximation by bounded polynomials yield quantum query upper bounds?} 
This was ruled out by Ambainis~\cite{Ambainis:2006}. Nonetheless, Aaronson, Ambainis, Iraids, Kokainis, and Smotrovs suggested that by refining the polynomial method, it could yield upper bounds to quantum query complexity \cite{Aaronson2015PolynomialsQQ}. This was confirmed by Arunachalam, Briët and Palazuelos, who precisely characterized the quantum query complexity of a Boolean function $f$ in terms of the \emph{completely bounded norm} of certain multilinear forms \cite{QQA=CBF}. 

Informally speaking, the completely bounded norm of a polynomial is defined as the infinity norm, but evaluating the polynomial in bounded matrices instead of bounded scalars.  
More precisely, the completely bounded norm of a polynomial
$$p:\R^n\to \R:\ x\to\sum_{s\in [t]_0}\sum_{\substack{\ind i\in [n]^s\\\ i_1\leq i_2\leq \dots \leq i_s}} p_{\ind i}x(i_1)\dots x(i_s)$$
is defined as
\begin{equation}\label{eq:cbnorm}
    \norm{p}_{\cb}:=\sup_{m\in \N}\sup_{\substack{X(i)\in\R^{m\times m}\\ \norm{X(i)}_{\op}\leq 1}} \norm{ \sum_{s\in [t]_0}\sum_{\substack{\ind i\in [n]^s\\\ i_1\leq i_2\leq \dots \leq i_s}}p_{\ind i} X(i_1)\dots X(i_s)}_{\op},
\end{equation}
where  $\norm{X}_{\op}$ stands for the operator norm of $X$ when regarded as a linear operator from $\ell_2(\mathbb R^m)$ to $\ell_2(\mathbb R^m).$

\subsection{Our results} In this work we prove functional inequalities involving the completely bounded norm for polynomials, and via the characterization of quantum query algorithms in terms of this norm, we obtain limiting results for different classes of quantum query algorithms. 

\subsubsection{An optimal root-influence bound  for block-multilinear polynomials}

Bansal, Sinha and de Wolf already used the characterization of quantum query algorithms in terms of polynomials whose completely bounded norm is at most one (from now on, \emph{completely bounded polynomials}) to make progress in the Aaronson--Ambainis conjecture \cite{Bansal:2022}. They focused on the special case of quantum algorithms for computing functions $$f:\underbrace{\{-1,1\}^n\times\dots\times \{-1,1\}^n}_{t\text{ times}}\to\{-1,1\},$$
where the input $x$ is divided in $t$ blocks $x_1,\dots,x_t$. They considered quantum query algorithms that make $t$ queries, one to each block, as in \cref{fig:QQAblocks}. For instance, $t$-fold forrelation, which optimally separates quantum and classical query complexity, can be encoded as an amplitude of one of these algorithms \cite{aaronson2015forrelation}. 

The amplitudes of these algorithms, and their restrictions, are completely bounded block-multilinear polynomials \cite{QQA=CBF}. More precisely, these amplitudes, and their restrictions, are polynomials $a:\{-1,1\}^n\times\dots\times \{-1,1\}^n\to\R$ \footnote{Usually, the amplitudes of quantum query algorithms are said to be complex, but we take them as real without loss of generality. This is because any quantum algorithm that makes $t$ queries can be simulated via a quantum algorithm that makes $t$ queries and only uses orthogonal matrices and computational basis measurements \cite{mckague2009simulating}.} that are block multilinear, meaning that they are affine on every block of variables, i.e. they can be written as 
\begin{equation*}
    a(x_1,\dots,x_t)=\sum_{\ind i \in [n]^t_0}a_{\ind i} x_1(i_1)\dots x_t(i_t),
\end{equation*}
for some real numbers $a_\ind i$, and where we established the convention that $x_s(0)=1$ for every $s\in [t].$
Furthermore, $\|a\|_{\cb}\leq 1$. In the case of block-multilinear polynomials, the completely bounded norm of \cref{eq:cbnorm}, when ordering the variables as $$x_1(1),\dots,x_1(n),x_2(1),\dots,x_2(n),\dots,x_t(1),\dots,x_t(n),$$ can be expressed as
\begin{equation}\label{eq:bmcbnorm}
    \norm{a}_{\cb}=\sup_{\substack{m\in\N,\, X_s(i)\in\R^{m\times m}\\ \norm{X_s(i)}_{\op}\leq 1\\ X_s(0)=\Id}} \norm{\sum_{\ind i \in [n]^t_0}a_{\ind i} X_1(i_1)\dots X_t(i_t)}_{\op}.
\end{equation}

Bansal et\ al. proved that
\begin{equation}\label{eq:BSdW}
    \maxinf [a]\norm{a}_{\cb}\geq \frac{\var[a]^2}{e(t+1)^4}.
\end{equation}

Hence, according to the bound $\|a\|_{\cb}\leq 1$ and \cref{eq:BSdW}, these amplitudes are in the conditions of \cref{con:AAconjecture}, so they can be classically and efficiently simulated as in \cref{con:needforstructure}.  

\begin{figure}[h!]
\centering
\resizebox{1\textwidth}{!}{
\begin{tikzpicture}[every node/.style={font=\LARGE}]

  \def\ycenter{13.25}

  \draw (10,15.75) rectangle (12.5,10.75) node[pos=.5] {$U_1$};
  \draw (20,15.75) rectangle (22.5,10.75) node[pos=.5] {$U_t$};

  \draw (6.25,15.75) rectangle (8.75,12);
  \node at (7.5,\ycenter) {$O_{x_1}$};

  \draw (16.25,15.75) rectangle (18.75,12);
  \node at (17.5,\ycenter) {$O_{x_t}$};

  \draw (23.75,15.75) rectangle (25.25,14.25);

  \draw (25,14.75) arc (0:180:0.5); 
  \draw[->] (24.5,14.85) -- ++(0.5,0.6); 

  \draw (23.75,13.75) rectangle (25.25,12.25);
  \draw (25,12.75) arc (0:180:0.5); 
  \draw[->] (24.5,12.85) -- ++(0.5,0.6); 

  \draw (23.75,11.75) rectangle (25.25,10.25);
  \draw (25,10.75) arc (0:180:0.5); 
  \draw[->] (24.5,10.85) -- ++(0.5,0.6); 

  \foreach \xA/\xB in {
    5/6.25, 8.75/10, 12.5/13.75,
    5/6.25, 8.75/10, 12.5/13.75,
    15/16.25, 18.75/20, 22.5/23.75,
    15/16.25, 18.75/20, 22.5/23.75
  } {
    \draw (\xA,13.25) -- (\xB,13.25);
  }

  \draw (5,15.25) -- (6.25,15.25);
  \draw (8.75,15.25) -- (10,15.25);
  \draw (12.5,15.25) -- (13.75,15.25);
  \draw (15,15.25) -- (16.25,15.25);
  \draw (18.75,15.25) -- (20,15.25);
  \draw (22.5,15.25) -- (23.75,15.25);

  \draw (5,11.25) -- (10,11.25);
  \draw (12.5,11.25) -- (13.75,11.25);
  \draw (15,11.25) -- (20,11.25);
    \draw (22.5,11.25) -- (23.75,11.25);

  \node at (4.5,13.25) {$|u\rangle$};
    \node at (14.4,11.25) {$\dots$};
  \node at (14.4,13.25) {$\dots$};
  \node at (14.4,15.25) {$\dots$};
\end{tikzpicture}

}
\caption{Quantum query algorithm that queries disjoint blocks of variables $x_1,\dots,x_t$.}
\label{fig:QQAblocks}
\end{figure}

To arrive at \cref{eq:BSdW}, Bansal et al.\ proved a stronger inequality, namely that for every homogeneous block-multilinear polynomial $a$ of degree at most $t$ and for every $s\in \{1,t\}$, one has that 
\begin{equation*}
    \norm{a}_{\cb}\geq \sum_{i\in [n]}\frac{\sqrt{\Inf_{x_s(i)}[a]}}{\sqrt{e(t+1)}}.
\end{equation*}
This was later improved by Escudero-Gutiérrez in ~\cite{gutierrez2023influences} in two ways: by showing that the optimal constant in the inequality is 1, namely 
\begin{equation*}
    \norm{a}_{\cb}\geq \sum_{i\in [n]}\sqrt{\Inf_{x_s(i)}[a]},
\end{equation*}and by showing that this inequality holds for every $s\in [t]$, not just for $s\in\{1,t\}$. 

For the general, non-homogeneous case, using that $\norm{a_{d}}_{\cb}\leq \norm{a}_{\cb}$ \cite[Eq. (22)]{gutierrez2023influences}, the above equation implies that 
\begin{equation*}
    \norm{a}_{\cb}\geq \sum_{i\in [n]}\frac{\sqrt{\Inf_{x_s(i)}[a]}}{t}
\end{equation*}for every $s\in [t]$. Summing over all blocks, we arrive at 
 \begin{equation}\label{eq:sota}
    \norm{a}_{\cb}\geq \sum_{s\in [t]}\sum_{i\in [n]}\frac{\sqrt{\Inf_{x_s(i)}[a]}}{t^2},
\end{equation}
for every block-multilinear polynomial $a$. 

Our first result consists in finding the optimal constant in \cref{eq:sota}. 

\begin{restatable}{theorem}{theoOptimalAAforBMCB}\label{theo:OptimalAAforBMCB}
    Let $a:\{-1,1\}^n\times\dots\times \{-1,1\}^n\to \R$ be a block-multilinear polynomial. Then, 
    \begin{equation*}
     \norm{a}_{\cb}\geq \sum_{s\in [t]}\sum_{i\in [n]}\frac{\sqrt{\Inf_{x_s(i)}[a]}}{t}.
\end{equation*}
    This inequality is optimal, as witnessed by $a(x_1,\dots,x_t)=x_1(1)\cdot\dots\cdot x_t(1).$
\end{restatable}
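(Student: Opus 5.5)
The plan is to reduce to the homogeneous case, where the optimal inequality is already known. For a homogeneous block-multilinear $b$ of degree $t$, \cite{gutierrez2023influences} gives $\norm{b}_{\cb}\geq \sum_{i}\sqrt{\Inf_{x_s(i)}[b]}$ for every $s\in[t]$. The loss of a factor $t^2$ in \cref{eq:sota} comes from passing through the homogeneous parts $a_d$ and then summing over blocks. Instead, I will \emph{homogenize} $a$ by adding one new variable per block, without changing the cb norm or the relevant influences. The factor $1/t$ then comes only from summing the homogeneous bound over the $t$ blocks.

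\textbf{Homogenization.} For each block, adjoin a coordinate $x_s(0)$, now a genuine variable rather than the constant $1$. Define the homogeneous block-multilinear polynomial of degree $t$ on $(n+1)$-variable blocks
\[
\tilde a(x_1,\dots,x_t)=\sum_{\ind i\in[n]_0^t}a_{\ind i}\,x_1(i_1)\cdots x_t(i_t).
\]
Its coefficients are those of $a$. By Parseval, $\Inf_{x_s(i)}[\tilde a]=\sum_{\ind i: i_s=i}a_{\ind i}^2=\Inf_{x_s(i)}[a]$ for every $i\in[n]$. The homogeneous inequality for $\tilde a$ in block $s$ then gives
\[
\norm{\tilde a}_{\cb}\geq\sum_{i\in[n]_0}\sqrt{\Inf_{x_s(i)}[\tilde a]}\geq \sum_{i\in[n]}\sqrt{\Inf_{x_s(i)}[a]}.
\]
Summing over $s\in[t]$ yields $t\norm{\tilde a}_{\cb}\geq\sum_{s,i}\sqrt{\Inf_{x_s(i)}[a]}$. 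It remains to show $\norm{\tilde a}_{\cb}=\norm{a}_{\cb}$. The inequality $\geq$ is immediate by substituting $X_s(0)=\Id$ in \cref{eq:bmcbnorm}.

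\textbf{Main step: $\norm{\tilde a}_{\cb}\leq\norm{a}_{\cb}$.} The difficulty is that in $\tilde a$ the matrix $X_s(0)$ is an arbitrary contraction sitting in the middle of the product. I will argue in two steps.

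First, reduce to orthogonal matrices. The expression $\sum a_{\ind i}X_1(i_1)\cdots X_t(i_t)$ is affine in each single matrix variable $X_s(i)$ when all others are fixed, because each monomial uses exactly one variable per block. Hence its operator norm is convex in each variable separately. The real unit ball of $\R^{m\times m}$ is the convex hull of the orthogonal matrices (by SVD, a diagonal contraction is an average of sign matrices). Replacing the variables one at a time by orthogonal matrices therefore shows that the supremum defining $\norm{\tilde a}_{\cb}$ is attained over orthogonal $X_s(i)$.

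Second, remove $X_s(0)$ by a telescoping change of variables. With all $X_s(i)$ orthogonal, set $P_0=\Id$ and $P_s=X_1(0)\cdots X_s(0)$, which is orthogonal, and define
\[
X_s'(i):=P_{s-1}X_s(i)P_s^{-1}.
\]
Then $X_s'(0)=\Id$ and each $X'_s(i)$ is a contraction. Every monomial satisfies $X_1'(i_1)\cdots X_t'(i_t)=X_1(i_1)\cdots X_t(i_t)P_t^{-1}$, so the operator norm of the sum is unchanged. This exhibits the value for $\tilde a$ as an admissible value in \cref{eq:bmcbnorm} for $a$, which proves the claim. I expect this step to be the main obstacle; the convexity and Russo–Dye-type reduction must be stated carefully, but it is standard.

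\textbf{Optimality.} For $a=x_1(1)\cdots x_t(1)$ we have $\norm{a}_{\cb}=1$, and the only nonzero influences are $\Inf_{x_s(1)}[a]=1$ for each $s$. The right-hand side is therefore $t\cdot 1/t=1$, so equality holds.
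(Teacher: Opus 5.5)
Your proof is correct, but it takes a different route from the paper's. The paper proves \cref{theo:optimalcbBHBlockMult} directly for block-multilinear $p$. It builds explicit orthogonal matrices: raising operators on the tensor factors for the blocks after $s$, and maps between orthonormal families for block $s$ and the blocks before it. These satisfy the constraint $X_s(0)=\Id$ by construction. It then optimizes over the orthonormal families via $\|M\|_{S_1}\geq\|M\|_{S_2}$, and summing over $s$ gives the theorem.

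You instead split the problem into two pieces. The first is the constant-$1$ inequality for $t$-linear forms, which you import from \cite{gutierrez2023influences}; it is also the $t$-linear special case of \cref{theo:optimalcbBHBlockMult}. The second is the homogenization identity $\norm{\tilde a}_{\cb}=\norm{a}_{\cb}$. Your argument for the identity is sound:
\begin{enumerate}
\item the expression is affine in each single matrix variable;
\item the real operator-norm ball is the convex hull of the orthogonal group, so every $X_s(0)$ may be taken orthogonal;
\item the gauge change $X_s'(i)=P_{s-1}X_s(i)P_s^{-1}$ makes $X_s'(0)=\Id$ while only multiplying the sum on the right by the orthogonal matrix $P_t^{-1}$.
\end{enumerate}

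Your route buys modularity. The identity shows that every coefficient-based lower bound for $t$-linear forms transfers verbatim to block-multilinear polynomials. This includes the $i=0$ term, the trace-norm refinement in the Remark, and hence the Bohnenblust--Hille corollary. So the extension ``from multilinear forms to block-multilinear polynomials'' claimed in that Remark is in fact formal. It also shows that the $t^2$ loss in \cref{eq:sota} came only from splitting $a$ into homogeneous parts.

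The paper's route buys self-containment. It does not depend on the external multilinear-form result, whose proof is essentially the paper's Steps 1--3 without the identity constraint.
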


Interestingly, we will show below in \cref{cor:blockQQAsim} that~\cref{theo:OptimalAAforBMCB} implies an efficient classical simulation result that improves over the state-of-the-art results that go via the Aaronson--Ambainis argument in \cite[Theorem 22]{Aaronsons:2014}, both quantitatively, providing polynomial speedups, and qualitatively, by using non-adaptive classical queries.

However, to improve on the state-of-the-art of the simulation of quantum query algorithms we need an alternative simulation argument to the one of Aaronson and Ambainis. Indeed, by using \cref{theo:OptimalAAforBMCB}, it follows that block-multilinear polynomials satisfy
\begin{equation*}
	\norm{a}_{\cb}\geq \sum_{s\in [t]}\sum_{i\in [n]}\frac{\sqrt{\Inf_{x_s(i)}[a]}}{t}\geq \sum_{s\in [t]}\sum_{i\in [n]}\frac{\Inf_{x_s(i)}[a]}{t\sqrt{\maxinf[a]}}=\frac{\Inf[a]}{t\sqrt{\maxinf[a]}},
\end{equation*} 
where $\Inf[a]=\sum_{s\in [t]}\sum_{i\in [n]}\Inf_{x_s(i)}[a]$ is the total influence of $a$. By rearranging, we see that the amplitudes of quantum query algorithms $a$ as in \cref{fig:QQAblocks}, for which we know that $\|a\|_{\cb}\leq 1$, satisfy 
\begin{equation}\label{eq:InfAA}
	\maxinf[a]\geq \frac{\Inf^2[a]}{t^2}.
\end{equation}
By using Poincaré's inequality $\Inf[a]\geq \var[a],$ we deduce that \cref{eq:InfAA} implies 
\begin{equation}\label{eq:varAA}
	\maxinf[a]\geq \frac{\var^2[a]}{t^2},
\end{equation}
which was proven in~\cite{gutierrez2023influences} and it is the starting point of the simulation argument by Aaronson and Ambainis. To improve on that we work directly with \cref{theo:OptimalAAforBMCB} (which is stronger than \cref{eq:varAA}) and propose a new simulation argument inspired by the Friedgut junta theorem, which establishes that functions with low total influence are close to juntas~\cite{friedgut1998boolean}. The new simulation argument is as follows.

\begin{restatable}{theorem}{theosimfromsqrtinf}\label{theo:simfromsqrtinf}
	Let $p:\{-1,1\}^n\to \R$. Then, there exists a junta $q:\{-1,1\}^n\to \R$ depending on at most $(\sum_i\sqrt{\Inf_i[p]})^2/(\eps^2\delta)$ variables such that $$\E_x[|q(x)-p(x)|^2]\leq \eps^2\delta.$$
	In particular, there is a classical non-adaptive deterministic algorithm that makes $(\sum_i\sqrt{\Inf_i[p]})^2/(\eps^2\delta)$ queries and approximates $p$ up to error $\eps$ on a $(1-\delta)$-fraction of the inputs. 
\end{restatable}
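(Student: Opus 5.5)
Let $J$ be the set of heavy variables, and let $q$ be the projection of $p$ onto functions depending only on the variables in $J$. Set
\[
\tau=\frac{\eps^2\delta}{\sum_i\sqrt{\Inf_i[p]}}, \qquad J=\{i\in[n]:\ \Inf_i[p]\geq \tau^2\}.
\]
Equivalently, $q(x)=\E_{y}[p(x_J,y_{\bar J})]$, or in Fourier terms $q=\sum_{S\subseteq J}\widehat p(S)\chi_S$. We then bound the size of $J$ and the error separately.

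\emph{Size of $J$.} Every $i\in J$ satisfies $\sqrt{\Inf_i[p]}\geq\tau$. Hence
\[
|J|\leq \frac{\sum_i\sqrt{\Inf_i[p]}}{\tau}=\frac{(\sum_i\sqrt{\Inf_i[p]})^2}{\eps^2\delta},
\]
which is the claimed bound.

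\emph{Error.} By Parseval,
\[
\E_x|p-q|^2=\sum_{S\not\subseteq J}|\widehat p(S)|^2 .
\]
Every set $S\not\subseteq J$ contains at least one $i\notin J$, so
\[
\E_x|p-q|^2\leq \sum_{i\notin J}\sum_{S\ni i}|\widehat p(S)|^2=\sum_{i\notin J}\Inf_i[p].
\]
For $i\notin J$ we have $\Inf_i[p]=\sqrt{\Inf_i[p]}\cdot\sqrt{\Inf_i[p]}<\tau\sqrt{\Inf_i[p]}$. Summing over $i\notin J$ gives
\[
\sum_{i\notin J}\Inf_i[p]\leq \tau\sum_i\sqrt{\Inf_i[p]}=\eps^2\delta.
\]
This is the crux of the argument: the root-influence sum controls both the number of heavy variables and the total mass of the light ones. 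That is exactly why $\sum_i\sqrt{\Inf_i[p]}$, rather than total influence as in Friedgut's theorem, yields a simple threshold argument with no hypercontractivity. The only delicate point is choosing the threshold so that the two bounds multiply to the stated quantity.

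\emph{The algorithm.} Since $p$ is fixed and known, $J$ and $q$ are determined in advance. On input $x$, the algorithm queries the coordinates in $J$ non-adaptively and outputs $q(x)$. By Markov's inequality,
\[
\Pr_x\big[|q(x)-p(x)|>\eps\big]\leq \frac{\E_x|q-p|^2}{\eps^2}\leq\delta,
\]
so the output is $\eps$-close to $p(x)$ on a $(1-\delta)$-fraction of inputs.
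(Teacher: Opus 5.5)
Your proposal is correct and matches the paper's proof essentially line by line. It uses the same threshold $\sqrt{\Inf_i[p]}\geq \eps^2\delta/\sum_j\sqrt{\Inf_j[p]}$ to define $J$, the same projection $q=\sum_{S\subseteq J}\widehat p(S)\chi_S$, the same counting bound on $|J|$, the same chain $\sum_{S\not\subseteq J}|\widehat p(S)|^2\leq\sum_{i\notin J}\Inf_i[p]\leq\eps^2\delta$, and the same Markov step (the degenerate case $\sum_i\sqrt{\Inf_i[p]}=0$, where $p$ is constant and $J=\emptyset$ works trivially, is left implicit in both).
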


The proof of \cref{theo:simfromsqrtinf} is simple. We just take $J\subseteq [n]$ to be the set of the variables with influence at least $\eps^4\delta^2/(\sum_i\sqrt{\Inf_i[p]})^2$, and take $q(x_J):=\mathbb E_{x_{[n]-J}}[p(x_J,x_{[n]-J})].$ Combining \cref{theo:OptimalAAforBMCB,theo:simfromsqrtinf}, we obtain a simulation result for quantum query algorithms as in \cref{fig:QQAblocks}.
\begin{corollary}\label{cor:blockQQAsim}
    Let $a:\{-1,1\}^n\times\dots\times\{-1,1\}^n\to \R$ be an amplitude of a $t$-query quantum algorithm that queries $t$ disjoint blocks of inputs, as in \cref{fig:QQAblocks}. Then, there is a deterministic classical algorithm  that $\eps$-approximates the amplitude on at least a ($1-\delta$)-fraction of the inputs making only $O(t^2/\eps^2\delta)$ non-adaptive queries.
\end{corollary}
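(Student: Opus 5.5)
The corollary follows by feeding \cref{theo:OptimalAAforBMCB} into \cref{theo:simfromsqrtinf}. Let $a$ be an amplitude of a $t$-query algorithm as in \cref{fig:QQAblocks}. By the characterization of \cite{QQA=CBF} recalled above, $a$ is a block-multilinear polynomial with $\norm{a}_{\cb}\leq 1$. We view $a$ as a function on the cube $\{-1,1\}^{nt}$, whose variables are the $x_s(i)$ with $s\in[t]$ and $i\in[n]$. Then \cref{theo:OptimalAAforBMCB} gives
\begin{equation*}
\sum_{s\in[t]}\sum_{i\in[n]}\sqrt{\Inf_{x_s(i)}[a]}\leq t\norm{a}_{\cb}\leq t.
\end{equation*}

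Next we apply \cref{theo:simfromsqrtinf} to $p=a$ on these $nt$ variables. It yields a junta $q$ depending on at most
\begin{equation*}
\Bigl(\sum_{s,i}\sqrt{\Inf_{x_s(i)}[a]}\Bigr)^2/(\eps^2\delta)\leq t^2/(\eps^2\delta)
\end{equation*}
variables, with $\E_x[|q(x)-a(x)|^2]\leq \eps^2\delta$. The classical algorithm queries exactly the variables in the junta set $J$ and outputs $q(x_J)$. The set $J$ depends only on $a$, which the algorithm knows, and not on the input, so the queries are non-adaptive and the algorithm is deterministic.

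For correctness, Markov's inequality gives
\begin{equation*}
\Pr_x\bigl[|q(x)-a(x)|>\eps\bigr]\leq \E_x\bigl[|q(x)-a(x)|^2\bigr]/\eps^2\leq \delta.
\end{equation*}
Hence the output $\eps$-approximates $a$ on at least a $(1-\delta)$-fraction of the inputs, and the number of queries is $O(t^2/\eps^2\delta)$.

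The only step needing care is that \cref{theo:simfromsqrtinf} is stated for polynomials on a single cube, so the root-influence sum must be taken over all $nt$ block variables. This matches exactly the double sum controlled by \cref{theo:OptimalAAforBMCB}. No restriction argument is needed, unlike in the adaptive Aaronson--Ambainis approach.
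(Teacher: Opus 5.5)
Your proposal is correct and follows the paper's proof. You combine $\norm{a}_{\cb}\leq 1$ with \cref{theo:OptimalAAforBMCB} to bound the root-influence sum over all $nt$ block variables by $t$, then invoke \cref{theo:simfromsqrtinf} to get a junta on at most $t^2/(\eps^2\delta)$ variables; your added details (the junta set depends only on $a$, and the explicit Markov bound) are what the paper leaves implicit.
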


\cref{cor:blockQQAsim} improves on the prior state-of-the-art simulation for quantum algorithms as in \cref{fig:QQAblocks}. This algorithm follows from the Aaronson--Ambainis theorem of~\cite{gutierrez2023influences} and the simulation argument of~\cite[Theorem 22]{Aaronsons:2014}, which together yield a classical query complexity of $O(t^3/(\eps^4\delta^3))$, and it uses adaptive queries.

\vspace{0.3cm}
\noindent\textbf{Proof idea of \cref{theo:OptimalAAforBMCB}.} We prove \cref{theo:OptimalAAforBMCB} by using a novel construction of matrices $X_s(i)$ that give a suitable lower bound to \cref{eq:bmcbnorm}. More precisely, we prove that 
\begin{equation}\label{eq:block1}
    \norm{a}_{\cb}\geq \sum_{i\in [n]_0}\sqrt{\Inf_{x_s(i)}[a]},
\end{equation}
for every $s\in [n],$ and then \cref{theo:OptimalAAforBMCB} follows by summing over the blocks. We sketch the proof of~\cref{eq:block1} for the case $s=1$, which is less notation-heavy than for the cases $1<s<t$. We construct bounded matrices $X_2(i_2),\dots,X_t(i_t)$ and a unit vector $e$ such that $$\{u_{\ind j}:=X_2(j_1)\dots X_t(j_t)e\}_{j_2,\dots,j_t\in [n]_0}$$ is an orthonormal set and also following the convention $X_s(0)=\Id$. Now, we can define $$v_i:=\sum_{\ind j\in [n]_0^{t-1}}a_{i,\ind j}u_\ind j.$$ Then, by taking $X_1(i_1)$ satisfying $X_1(i_1)v_{i_1}/\norm{v_{i_1}}=v_0/\norm{v_0}_{\ell_2}$, again with $X_1(0)=\Id$, we have that 
\begin{equation*}
    \norm{a}_{\cb}\geq \sum_{i\in [n]_0} \left\langle \frac{v_0}{\norm{v_0}_{\ell_2}},X_1(i)v_{i}\right\rangle=\sum_{i\in [n]_0}\norm{v_i}_{\ell_2}=\sum_{i\in [n]_0}\sqrt{\sum_{\ind j\in [n]^{t-1}}a_{i,\ind j}^2}.
\end{equation*}
Now, by Parseval's identity, we have that $\Inf_{x_1(i)}[a]=\sum_{\ind j\in [n]^{t-1}}a_{i,\ind j}^2$, so 
\begin{equation*}
    \norm{a}_{\cb}\geq \sum_{i\in [n]_0}\sqrt{\Inf_{x_1(i)}[a]}.
\end{equation*}

\vspace{0.3cm}
\noindent\textbf{Some comments about \cref{theo:OptimalAAforBMCB} and \cref{cor:blockQQAsim}.} We prove another simulation result, apart from \cref{theo:simfromsqrtinf}, which instead of building upon inequalities like the one in~\cref{theo:optimalcbBHBlockMult}, builds upon weaker inequalities like the one in~\cref{eq:InfAA}. Thus, together with the simulation theorem of Aaronson and Ambainis, we have three different simulation results, relying on different functional analytic assumptions. For these results, the stronger the analytic assumption, the better the simulation result. We discuss and compare the three simulation results in \cref{sec:simulationresults}.

We believe that both of these simulation results are of independent interest. For instance, in order to prove Aaronson and Ambainis conjecture with exponential dependence on the degree, it was shown in~\cite[Section 4]{defant2019fourier} that every polynomial $p:\{-1,1\}^n\to [-1,1]$ of degree $t$ satisfies $\sum_i\sqrt{\Inf_i[p]}\leq t^2e^t$. This, together with \cref{theo:simfromsqrtinf}, implies that $t$-query quantum algorithms can be simulated with $\exp(O(t))$ non-adaptive classical queries almost everywhere. Such a simulation result, with adaptive queries, was the state-of-the-art of the simulation conjecture for general bounded polynomials. Hence, \cref{theo:simfromsqrtinf} establishes a new state-of-the-art for the simulation of bounded polynomials. Sadly, \cref{theo:simfromsqrtinf} cannot lead to an efficient classical simulation for all bounded polynomials, because the address function of degree $t$ satisfies $\sum_i\sqrt{\Inf_i[p]}\geq 2^{t-2}.$

Finally, we want to mention that the proof technique of  \cref{theo:optimalcbBHBlockMult} also allows us to show that completely bounded block-multilinear polynomials satisfy a cb-Bohnenblust--Hille inequality with constant 1, as stated in \cref{coro:bhblockmultilinear}. Namely, we show that 
\begin{equation}\label{eq:BHcbbm}
    \left(\sum_{\ind i\in [n]_0^t}|a_{\ind i}|^{\frac{2d}{d+1}}\right)^{\frac{d+1}{2d}}\leq \norm{a}_{\cb}.
\end{equation}
This improves upon the previous best version of the inequality, proved in \cite{arunachalam2025cb}, which yields a constant of $\sqrt{t+1}$ for arbitrary polynomials of degree $t$.

Bohnenblust--Hille inequalities have led to major breakthroughs in the study of Dirichlet series \cite{bohnenblust1931absolute,defant2011bohnenblust}, and more recently in learning theory \cite{eskenazis2022learning,volberg2023noncommutative,slote2024bohnenblust}. In particular, \cref{eq:BHcbbm} implies an improved PAC learning result for the amplitudes of quantum query algorithms as in \cref{fig:QQAblocks} by following the ideas of \cite{arunachalam2025cb}.

\subsubsection{Optimal Fourier growth of the highest level of quantum query algorithms}
We give the first approach that gives upper bounds to the Fourier growth of quantum query algorithms via completely bounded polynomials. In particular, we focus on a recent question by Girish \cite[Question 4]{girish2026fourier}. She asked whether the dependence on $d,t$ of her upper bound to Fourier growth for quantum query algorithms could be improved. She showed that the acceptance probabilities of standard $t$-query quantum algorithms, i.e. those that query the whole input every time as in \cref{fig:QQA}, satisfy that 
\begin{equation*}\label{eq:UmasBound}
    \norm{\widehat p_{d}}_{\ell_1}\leq c^{d}\sqrt{d!}t(\log(n))^{d-1}n^{\frac{d-1}{2}},
\end{equation*}
for some constant $c$, and for every $d,t\in\N$. More generally, Iyer, Rao, Reis, Rothvoss, and Yehudayoff showed that for polynomials $p$ of degree $2t$, one has \cite{iyer2021tight}
\begin{equation*}\label{eq:IyersBound}
    \norm{\widehat p_{d}}_{\ell_1}\leq t^de^{{d+1\choose 2}} n^{\frac{d-1}{2}}.
\end{equation*}
In particular, for the acceptance probabilities of $t$-query quantum algorithms, the best prior upper bound to the Fourier growth of the highest level is 
\begin{equation}\label{eq:2tgrowth}
    \norm{\widehat p_{2t}}_{\ell_1}\leq \min\{ c^{2t}\sqrt{2t!}t(\log(n))^{d-1}n^{\frac{2t-1}{2}}, t^{2t}e^{{2t+1\choose 2}} n^{\frac{2t-1}{2}}\}.
\end{equation}

Our second result is a tight Fourier growth bound for the highest level of $t$-query quantum algorithms, solving the highest level case of the question by Girish. 

\begin{restatable}{theorem}{theooptFGstd}\label{theo:optFGstd}
    Let $p:\{-1,1\}^n\to \mathbb{R}$ be the acceptance probability of a $t$-query quantum algorithm as in \cref{fig:QQA}. Then, 
   \begin{equation*}
        \norm{\widehat p_{2t}}_{\ell_1}\leq \sqrt{{n-1\choose 2t-1}}.
\end{equation*}  
\end{restatable}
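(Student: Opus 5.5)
The plan is to work directly with the operator structure of the acceptance probability rather than with a general cb-norm estimate. Write the algorithm as $A(x)=U_tO_xU_{t-1}\cdots U_1O_xU_0$. Here $O_x=\sum_{i\in[n]_0}x(i)D_i$ (with $x(0)=1$), and the $D_i$ are orthogonal projections onto the query register with $\sum_i D_i=\Id$. Then $p(x)=\langle u, A(x)^*PA(x)u\rangle$ for some projection $P$. Expanding, $p(x)=\langle u', O_x V_1 O_x V_2\cdots V_{2t-1}O_x u''\rangle$, where the $V_k$ are contractions (unitaries, or $P$ in the middle) and $u',u''$ are unit vectors. Every degree-$2t$ monomial $x_S$ arises only from the choices $(i_1,\dots,i_{2t})$ that are orderings of $S$, because repeated indices collapse to lower degree since $x(i)^2=1$. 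Hence
\begin{equation*}
\widehat p(S)=\sum_{i_1\in S}\langle D_{i_1}u', w_{S\setminus\{i_1\}}\rangle .
\end{equation*}
Here $w_R$ denotes the top-level ($|R|=2t-1$) Fourier coefficient of the vector-valued function $F(x)=V_1O_xV_2\cdots V_{2t-1}O_xu''$.

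To bound $\norm{\widehat p_{2t}}_{\ell_1}$, fix signs $\eps_S=\sign\widehat p(S)$ and regroup by the first index:
\begin{equation*}
\norm{\widehat p_{2t}}_{\ell_1}=\sum_{i_1\in[n]}\Big\langle D_{i_1}u',\ D_{i_1}W_{i_1}\Big\rangle,\qquad W_{i_1}:=\sum_{\substack{|R|=2t-1\\ i_1\notin R}}\eps_{R\cup\{i_1\}}\,w_R .
\end{equation*}
Cauchy--Schwarz over $i_1$, together with the orthogonality of the $D_{i_1}$, gives the bound
\begin{equation*}
\Big(\sum_{i_1}\norm{D_{i_1}u'}^2\Big)^{1/2}\Big(\sum_{i_1}\norm{D_{i_1}W_{i_1}}^2\Big)^{1/2}.
\end{equation*}
The first factor is at most $\norm{u'}\leq 1$.

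For the second factor, fix $i_1$ and apply Cauchy--Schwarz to the sum defining $W_{i_1}$, which has at most ${n-1\choose 2t-1}$ terms. This gives
\begin{equation*}
\norm{D_{i_1}W_{i_1}}^2\leq {n-1\choose 2t-1}\sum_{R\not\ni i_1}\norm{D_{i_1}w_R}^2 .
\end{equation*}
Summing over $i_1$ and using $\sum_{i}\norm{D_iw_R}^2\le\norm{w_R}^2$, we get
\begin{equation*}
\sum_{i_1}\norm{D_{i_1}W_{i_1}}^2\leq{n-1\choose 2t-1}\sum_{|R|=2t-1}\norm{w_R}^2 .
\end{equation*}
It remains to show $\sum_R\norm{w_R}^2\le 1$. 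This is the vector-valued Parseval identity: $\E_x\norm{F(x)}^2=\sum_{T}\norm{\widehat F(T)}^2$, and $\norm{F(x)}\leq 1$ for every $x\in\{-1,1\}^n$ because $F(x)$ is a product of contractions applied to a unit vector (each $O_x$ is unitary for Boolean $x$). Combining the steps yields $\norm{\widehat p_{2t}}_{\ell_1}\le\sqrt{{n-1\choose 2t-1}}$.

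The step I expect to need the most care is the first one: correctly identifying $\widehat p(S)$ as a sum over orderings of $S$, and checking that lower-degree collapses (repeated indices, and the index $0$) contribute nothing at level $2t$. That bookkeeping is what justifies the factorization through the first index $i_1$ and the identification of $w_R$ as the top-level coefficients of the $(2t-1)$-query vector function $F$. The Cauchy--Schwarz steps themselves are routine. The choice to pull out a single index, rather than splitting the $2t$ queries into two halves of $t$, is what produces the sharp count ${n-1\choose 2t-1}$ instead of something like ${n\choose t}$.
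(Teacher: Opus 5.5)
Your proof is correct, and it takes a genuinely different route from the paper's.

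\textbf{The paper's route.} After \cref{theo:cbmethodstd}, the paper never looks inside the circuit again. It works only with the $2t$-linear extension $T$, which has $\norm{T}_{\cb}\leq 1$ and, by \cref{lemma:2}, $\widehat p(S)=\sum_{S_{\ind i}=S}T_{\ind i}$. It then lower-bounds $\norm{T}_{\cb}$ by evaluating $T$ on explicit contractions acting on the span of $v$ and $\{e_S\}_{|S|\leq 2t-1}$. These are $X(i)v=\binom{n-1}{2t-1}^{-1/2}\sum_{S\ni i,\,|S|=2t}\sign(\widehat p(S))\,e_{S\setminus i}$ and $X(i)e_S=\delta_{i\in S}e_{S\setminus i}$. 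With this choice, $\langle e_\emptyset,X(i_1)\cdots X(i_{2t})v\rangle$ equals $\sign(\widehat p(S_{\ind i}))\binom{n-1}{2t-1}^{-1/2}$ on tuples of distinct indices in $[n]$ and vanishes otherwise.

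\textbf{Your route.} You split off one oracle call via $O_x=\sum_i x(i)D_i$ and read off $\widehat p(S)=\sum_{i\in S}\langle D_iu',\widehat F(S\setminus\{i\})\rangle$ by multiplying Fourier expansions. You then finish with a single Cauchy--Schwarz and vector-valued Parseval for $F$, which is bounded on the cube because $O_x$ is unitary there. The count $\binom{n-1}{2t-1}$ enters at the Cauchy--Schwarz step, which plays the role of the paper's normalization of $X(i)v$.

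\textbf{What each buys.} Your argument is more elementary: there are no auxiliary operators to construct and check contractivity for. It also yields the slightly finer estimate $\norm{\widehat p_{2t}}_{\ell_1}\leq\binom{n-1}{2t-1}^{1/2}(\sum_{|R|=2t-1}\norm{\widehat F(R)}^2)^{1/2}$. The paper's argument is phrased as the functional inequality $\norm{\widehat p_{2t}}_{\ell_1}\leq\binom{n-1}{2t-1}^{1/2}\norm{T}_{\cb}$ for any such extension $T$, which is the form the paper is after. Its matrix construction is also reused, with the spreading placed at a largest block, for \cref{prop:fgrowthadaptivequeriesfgrowth}.

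The gap between the two approaches is smaller than it looks. By the Christensen--Sinclair factorization, any $T$ with $\norm{T}_{\cb}\leq 1$ can be written as $\langle u,\pi_1(x_1)V_1\cdots V_{2t-1}\pi_{2t}(x_{2t})v\rangle$, with unital $*$-representations $\pi_k$ of $\ell_\infty^{2n}$ and contractions $V_k$. Your argument runs unchanged in that form, so it would recover the cb version as well.
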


In particular, from \cref{theo:optFGstd}, it follows that for $t$-query quantum algorithms, $\norm{\widehat p_{2t}}_{\ell_1}\leq (en/(2t-1))^{t-1/2}$. This is nearly tight, as $2t$-fold forrelation can be computed with $t$ queries, and satisfies $\norm{\widehat p_{2t}}_{\ell_1}=(n/2t)^{t-1/2}$.

\begin{figure}[ht!]
\centering
\resizebox{1\textwidth}{!}{%
\begin{tikzpicture}[every node/.style={font=\LARGE}]

  \def\ycenter{13.25}

  \draw (10,15.75) rectangle (12.5,10.75) node[pos=.5] {$U_1$};
  \draw (20,15.75) rectangle (22.5,10.75) node[pos=.5] {$U_t$};

  \draw (6.25,15.75) rectangle (8.75,12);
  \node at (7.5,\ycenter) {$O_x$};

  \draw (16.25,15.75) rectangle (18.75,12);
  \node at (17.5,\ycenter) {$O_x$};

  \draw (23.75,15.75) rectangle (25.25,14.25);

  \draw (25,14.75) arc (0:180:0.5); 
  \draw[->] (24.5,14.85) -- ++(0.5,0.6); 

  \draw (23.75,13.75) rectangle (25.25,12.25);
  \draw (25,12.75) arc (0:180:0.5); 
  \draw[->] (24.5,12.85) -- ++(0.5,0.6); 

  \draw (23.75,11.75) rectangle (25.25,10.25);
  \draw (25,10.75) arc (0:180:0.5); 
  \draw[->] (24.5,10.85) -- ++(0.5,0.6); 

  \foreach \xA/\xB in {
    5/6.25, 8.75/10, 12.5/13.75,
    5/6.25, 8.75/10, 12.5/13.75,
    15/16.25, 18.75/20, 22.5/23.75,
    15/16.25, 18.75/20, 22.5/23.75
  } {
    \draw (\xA,13.25) -- (\xB,13.25);
  }

  \draw (5,15.25) -- (6.25,15.25);
  \draw (8.75,15.25) -- (10,15.25);
  \draw (12.5,15.25) -- (13.75,15.25);
  \draw (15,15.25) -- (16.25,15.25);
  \draw (18.75,15.25) -- (20,15.25);
  \draw (22.5,15.25) -- (23.75,15.25);

  \draw (5,11.25) -- (10,11.25);
  \draw (12.5,11.25) -- (13.75,11.25);
  \draw (15,11.25) -- (20,11.25);
    \draw (22.5,11.25) -- (23.75,11.25);

  \node at (4,13.25) {$|u\rangle$};
  \node at (14.4,11.25) {$\dots$};
  \node at (14.4,13.25) {$\dots$};
  \node at (14.4,15.25) {$\dots$};

\end{tikzpicture}

}
\caption{Standard quantum query algorithm.}
\label{fig:QQA}
\end{figure}

\noindent\textbf{Sketch of the proof.} We build upon \cite{QQA=CBF}, where it was shown that given the acceptance probability $p:\{-1,1\}^n\to \R$ of a $t$-query quantum algorithm as in \cref{fig:QQA}, there is a $2t$-linear form $T:\mathbb R^{2n}\times\dots\times\R^{2n}\to \R$ which
    \begin{enumerate}
        \item extends $p,$ meaning that $p(x)=T((x,1^n), \dots,(x,1^n))$  for every $x\in\{-1,1\}^n$, \label{item:1intro}
        \item is completely bounded, meaning that its completely bounded norm, $$ \norm{T}_{\cb}=\sup_{\substack{m\in\N,\, X_s(i)\in\R^{m\times m}\\ \norm{X_s(i)}_{\op}\leq 1}}\norm{\sum_{\ind i\in [2n]^{2t}}T_{\ind i}X_1(i_1)\dots X_{2t}(i_{2t})}_{\mathrm{op}},$$
        is smaller than 1.\label{item:2intro}
    \end{enumerate}
Our main contribution towards \cref{theo:optFGstd} is to find an explicit tuple of matrices $X(i)$ and vectors $e_\emptyset$ and $v$, that satisfy 
\begin{align}
        \langle e_\emptyset, X(i_1)\dots X(i_{2t}) v\rangle&=\left\{\begin{array}{ll}
           \frac{\sign{(\widehat p (S_{\ind i}))}}{\sqrt{{n-1\choose 2t-1}}}  &  \text{if } |S_\ind i|=2t,\\
           0  & \text{otherwise}.
        \end{array}\right.\label{eq:keyeqFG}
\end{align}
Here, given $\ind i\in [2n]^{2t}$ we denote $S_\ind i=\{i\in [n]:\, i \text{ occurs an odd number of times in }\ind i\}.$ Once there,
\begin{equation*}
    \norm{T}_{\cb}\geq \frac{1}{\sqrt{{n-1\choose 2t-1}}}\sum_{\ind i\in [2n]^t}\delta_{|S_\ind i|=2t}T_{\ind i}\sign((\widehat p(S_\ind i))=\frac{1}{\sqrt{{n-1\choose 2t-1}}}\sum_{|S|=2t}\sign((\widehat p(S))\sum_{\ind i\in [2n]^t:\ S_\ind i=S}T_{\ind i}.
\end{equation*}
The next step of the proof is to notice that \cref{item:1intro} implies $\widehat p(S)=\sum_{\ind i\in [2n]^t:\ S_\ind i=S}T_{\ind i},$ which leads to the functional inequality
\begin{equation*}
    \norm{T}_{\cb}\geq \frac{1}{\sqrt{{n-1\choose 2t-1}}}\sum_{|S|=2t}\sign (\widehat p(S)) \widehat p(S)=\frac{1}{\sqrt{{n-1\choose 2t-1}}}\sum_{|S|=2t} |\widehat p(S)|.
\end{equation*}
Finally, by using \cref{item:2intro} one concludes the proof.

\vspace{0.3cm}
\noindent\textbf{Some comments about \cref{theo:optFGstd}.} Sadly our proof technique only works for the highest and the second highest levels. For the second highest level we can show $\norm{\widehat p_{2t-1}}_{\ell_1}\leq \sqrt{{n-1\choose 2t-2}},$ which also improves over prior works by a large factor. However, for the levels $2t-s$ for $s\geq 2,$ our matrices no longer satisfy \cref{eq:keyeqFG}, as we detail in \cref{rem:wherebreaks}. 

Nonetheless, we can prove a generalization of \cref{theo:optFGstd} to quantum query algorithms with bounded adaptivity (see \cref{prop:fgrowthadaptivequeriesfgrowth}). If one allows for $r-1$ rounds of adaptivity, as in \cref{fig:QQAboundedadativity}, we can show that the acceptance probability of a $t$-query quantum algorithm satisfies 
\begin{equation}\label{eq:oursFGadapt}
    \norm{\widehat p_{2t}}_{\ell_1}\leq \left(\frac{2en}{2t-t_{\max}}\right)^{t-t_{\max}/2},
\end{equation}
where $t_{\max}$ is the maximum number of parallel queries made by the algorithm. The previous best bound was by Girish, Sinha, Tal and  Wu~\cite[Corollary 4.2]{girish2024power}, who showed that 
\begin{equation}\label{eq:previousFGadapt}
    \norm{\widehat p_{2t}}_{\ell_1}\leq 2^{rt\min\{2^{4rt,(2t)^{4^r}}\}}t^{2t}\left(\frac{n}{t}\right)^{t-\frac{t}{2r}}.
\end{equation}
Note that the worst case of our bound in \cref{eq:oursFGadapt} is when $t_{\max}=t/r.$ Then, one recovers the dependence on $n$ in \cref{eq:previousFGadapt}, and improves \cref{eq:previousFGadapt}.
Finally, we also want to stress that, in contrast to prior work, our bound of \cref{eq:oursFGadapt} depends on $n$ as 
\begin{equation*}
    n^{t-t_{\max}/2},
\end{equation*}
which decreases when $t_{\max}$ increases. Thus, our bound not only suggests that bounded adaptivity limits the power of quantum query algorithms, but it goes further and quantifies the limitations caused by making many queries in parallel.

\begin{figure}[h!]
\centering
\resizebox{1\textwidth}{!}{%
\begin{tikzpicture}[every node/.style={font=\LARGE}]

  \def\ycenter{13.25}

  \draw (10,15.75) rectangle (12.5,10.75) node[pos=.5] {$U_1$};
  \draw (20,15.75) rectangle (22.5,10.75) node[pos=.5] {$U_r$};

  \draw (6.25,15.75) rectangle (8.75,12);
  \node at (7.5,\ycenter) {$O_{x}^{\otimes t_1}$};

  \draw (16.25,15.75) rectangle (18.75,12);
  \node at (17.5,\ycenter) {$O_{x}^{\otimes t_r}$};

  \draw (23.75,15.75) rectangle (25.25,14.25);

  \draw (25,14.75) arc (0:180:0.5); 
  \draw[->] (24.5,14.85) -- ++(0.5,0.6); 

  \draw (23.75,13.75) rectangle (25.25,12.25);
  \draw (25,12.75) arc (0:180:0.5); 
  \draw[->] (24.5,12.85) -- ++(0.5,0.6); 

  \draw (23.75,11.75) rectangle (25.25,10.25);
  \draw (25,10.75) arc (0:180:0.5); 
  \draw[->] (24.5,10.85) -- ++(0.5,0.6); 

  \foreach \xA/\xB in {
    5/6.25, 8.75/10, 12.5/13.75,
    5/6.25, 8.75/10, 12.5/13.75,
    15/16.25, 18.75/20, 22.5/23.75,
    15/16.25, 18.75/20, 22.5/23.75
  } {
    \draw (\xA,13.25) -- (\xB,13.25);
  }

  \draw (5,15.25) -- (6.25,15.25);
  \draw (8.75,15.25) -- (10,15.25);
  \draw (12.5,15.25) -- (13.75,15.25);
  \draw (15,15.25) -- (16.25,15.25);
  \draw (18.75,15.25) -- (20,15.25);
  \draw (22.5,15.25) -- (23.75,15.25);

  \draw (5,11.25) -- (10,11.25);
  \draw (12.5,11.25) -- (13.75,11.25);
  \draw (15,11.25) -- (20,11.25);
    \draw (22.5,11.25) -- (23.75,11.25);

  \node at (4.5,13.25) {$|u\rangle$};
  \node at (14.4,11.25) {$\dots$};
  \node at (14.4,13.25) {$\dots$};
  \node at (14.4,15.25) {$\dots$};
\end{tikzpicture}

}%
\caption{Quantum query algorithm with $r-1$ rounds of adaptivity.}
\label{fig:QQAboundedadativity}
\end{figure}

\subsubsection{Classical simulation of non-adaptive quantum query algorithms}
We also consider non-adaptive quantum query algorithms that make the $t$ queries in parallel, as in \cref{fig:QQAnonadaptive}. We make a simple, yet powerful, remark that goes beyond the standard polynomial method: the amplitudes of these algorithms are not only bounded polynomials of degree $t,$ but they are restrictions of bounded polynomials of degree just 1. As a consequence, we can prove the following. 

\begin{restatable}{proposition}{propell}\label{prop:ell1boundamplitudes}
    Let $a:\{-1,1\}^n\to\R$ be the amplitude of a non-adaptive quantum algorithm that makes $t$ queries, as in \cref{fig:QQAnonadaptive}. Then, 
    \begin{equation*}
        \norm{\widehat a}_{\ell_1}\leq 1.
    \end{equation*}
\end{restatable}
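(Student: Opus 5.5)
The plan is to write the non-adaptive algorithm as a single application of the unitary $O_x^{\otimes t}$ sandwiched between two fixed unitaries. Every amplitude then becomes a matrix entry of a tensor product of diagonal $\pm1$ matrices. Concretely, we may assume (by the real-amplitude reduction recalled in the footnote) that the algorithm prepares a fixed unit vector $u$, applies $U_0$, then $O_x^{\otimes t}$ (acting on $t$ query registers, each of which may include a control/workspace so that index $0$ with $x(0)=1$ is allowed), then $U_1$, and finally measures. Absorbing $U_0$ and $U_1$ into the initial vector and the final basis vector, an amplitude has the form
\begin{equation*}
a(x)=\langle w, O_x^{\otimes t}\otimes \Id\, v\rangle
\end{equation*}
for fixed unit vectors $v,w$.

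Since $O_x$ is diagonal with entries $x(i)$ (with $x(0)=1$), the operator $O_x^{\otimes t}\otimes\Id$ is diagonal in the basis $\ket{i_1,\dots,i_t}\ket{k}$, with entries $x(i_1)\cdots x(i_t)$. Hence
\begin{equation*}
a(x)=\sum_{\ind i\in[n]_0^t}\sum_k w_{\ind i,k}v_{\ind i,k}\, x(i_1)\cdots x(i_t).
\end{equation*}
This is the key remark: $a$ is the restriction to the "diagonal'' $z_{\ind i}=x(i_1)\cdots x(i_t)$ of the linear form $L(z)=\sum_{\ind i} c_{\ind i} z_{\ind i}$, where $c_{\ind i}=\sum_k w_{\ind i,k}v_{\ind i,k}$. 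This linear form is bounded on $\{-1,1\}^{[n]_0^t}$ with $\norm{L}_\infty=\sum_{\ind i}|c_{\ind i}|$.

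Next, I bound $\sum_{\ind i}|c_{\ind i}|$ by Cauchy--Schwarz:
\begin{equation*}
\sum_{\ind i}|c_{\ind i}|\le\sum_{\ind i}\norm{w_{\ind i,\cdot}}_{\ell_2}\norm{v_{\ind i,\cdot}}_{\ell_2}\le\Big(\sum_{\ind i}\norm{w_{\ind i,\cdot}}^2\Big)^{1/2}\Big(\sum_{\ind i}\norm{v_{\ind i,\cdot}}^2\Big)^{1/2}=1.
\end{equation*}
Here $w_{\ind i,\cdot}$ denotes the block of coordinates of $w$ whose query registers read $\ind i$, and similarly for $v$.

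Finally, I pass to Fourier coefficients. Each monomial $x(i_1)\cdots x(i_t)=\chi_{S_{\ind i}}(x)$, where $S_{\ind i}$ is the set of indices in $[n]$ occurring an odd number of times (index $0$ is ignored). Therefore $\widehat a(S)=\sum_{\ind i:S_{\ind i}=S}c_{\ind i}$, and the triangle inequality gives $\norm{\widehat a}_{\ell_1}\le\sum_{\ind i}|c_{\ind i}|\le1$.

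The only point needing care is the first step: justifying that a general non-adaptive algorithm, including controlled queries and phase-type versus bit-type oracles, fits the form $\langle w,(O_x^{\otimes t}\otimes\Id)v\rangle$ with $O_x$ diagonal in a fixed basis. This follows by choosing the query registers to include the control bit, encoded as the index $0$. Everything after that is Cauchy--Schwarz.
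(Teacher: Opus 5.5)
Your proof is correct and follows essentially the same route as the paper: you write $a$ as the restriction of a linear form in the monomials $x(i_1)\cdots x(i_t)$, bound the $\ell_1$-norm of its coefficients by $1$, and then apply the triangle inequality through $\widehat a(S)=\sum_{\ind i:\,S_{\ind i}=S}c_{\ind i}$ (as in \cref{lemma:2adaptive}). The only difference is how you get $\sum_{\ind i}|c_{\ind i}|\le 1$: you use Cauchy--Schwarz on the blocks of $v$ and $w$ directly, whereas the paper combines $\norm{T}_{\cb}\le 1$ from \cref{prop:cbmethodnonadaptive} with \cref{fact:boundedlinear}, i.e.\ it chooses signs $z_{\ind i}=\sign(c_{\ind i})$ and uses that $\Diag(z)\otimes\Id$ is a contraction.
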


Note that the standard polynomial method states that the amplitudes of these algorithms are bounded and of degree at most $t,$ which only implies that $\norm{\widehat a}_{\ell_1}=O_{t}(n^{(t-1)/2})$~\cite{iyer2021tight}. We combine \cref{prop:ell1boundamplitudes} with a folklore simulation technique for polynomials with bounded $\ell_1$-norm \cite{grolmusz1997power}. This lets us show that quantum query algorithms which measure all qubits and only accept on a few measurement outcomes (for which the acceptance probability is the sum of the squares of a few amplitudes) can be efficiently and classically simulated. 

\begin{restatable}{corollary}{corsimnonadapt}\label{cor:simnonadapt}
    Let $p:\{-1,1\}^n\to [0,1]$ be the acceptance probability of a $t$-query non-adaptive quantum algorithm that measures all qubits and accepts only on $r$ of the measurement outcomes. Then, there is a non-adaptive randomized classical algorithm that makes only $$O\left(\frac{tr^2}{\eps^2}\cdot\log\left(\frac{1}{\delta}\right)\right)$$ queries, and that for every input $x\in\{-1,1\}^n$, outputs an estimate $\widetilde p(x)$ that, with probability $\geq1-\delta$, satisfies $|\widetilde p(x)-p(x)|\leq \eps.$ 
\end{restatable}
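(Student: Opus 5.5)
We combine \cref{prop:ell1boundamplitudes} with a Grolmusz-style sampling estimator. Write the acceptance probability as $p(x)=\sum_{k=1}^r a_k(x)^2$, where $a_1,\dots,a_r$ are the amplitudes of the $r$ accepting measurement outcomes. For non-adaptive algorithms as in \cref{fig:QQAnonadaptive}, the same reasoning works with complex amplitudes by splitting them into real and imaginary parts. Alternatively, we can assume real amplitudes via \cite{mckague2009simulating}; since that simulation uses one extra qubit and only parallel queries, non-adaptivity is preserved. By \cref{prop:ell1boundamplitudes}, each $a_k$ satisfies $\norm{\widehat a_k}_{\ell_1}\le 1$. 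Moreover $\deg a_k\le t$, so every Fourier character in the support of $a_k$ depends on at most $t$ coordinates.

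\textbf{Estimating one amplitude.} Fix $k$ and put $L_k=\norm{\widehat a_k}_{\ell_1}\le 1$. Sample $S\subseteq[n]$ with probability $|\widehat a_k(S)|/L_k$. Query $x$ on $S$, which costs at most $t$ queries, and output $Z=L_k\sign(\widehat a_k(S))\chi_S(x)$. Then $\E[Z]=a_k(x)$ and $|Z|\le 1$. The distribution of $S$ depends only on $a_k$ and not on $x$, so the queries are non-adaptive.

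\textbf{Combining the $r$ estimates.} Averaging $m$ independent samples gives an estimate $\widetilde a_k$ with $|\widetilde a_k-a_k(x)|\le\eta$ with probability at least $1-\delta/r$, provided $m=O(\log(r/\delta)/\eta^2)$ by Hoeffding. Since $|a_k|\le1$ and $|\widetilde a_k|\le 1$, we have
\[
|\widetilde a_k^2-a_k^2|=|\widetilde a_k-a_k|\,|\widetilde a_k+a_k|\le 2\eta .
\]
Summing over $k$, the estimator $\widetilde p=\sum_k\widetilde a_k^2$ has error at most $2r\eta$. Setting $\eta=\eps/(2r)$ and taking a union bound gives total query count $O\big(t\,r\cdot r^2\log(r/\delta)/\eps^2\big)$. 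This is worse than the target by a factor of $r$ and has $\log(r/\delta)$ in place of $\log(1/\delta)$.

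\textbf{Main obstacle: reaching $tr^2/\eps^2$.} The fix I would try first avoids estimating each amplitude to accuracy $\eps/r$. Instead, estimate $p$ directly with a single unbiased estimator of $\sum_k a_k^2$. For each $k$, draw two independent samples $S,S'$ from the distribution of $a_k$, giving unbiased estimators $Z_k,Z_k'$ of $a_k(x)$. Independence gives $\E[Z_kZ_k']=a_k(x)^2$. The combined estimator $W=\sum_k Z_kZ_k'$ then satisfies
\[
\E[W]=p(x),\qquad |W|\le r,
\]
and one draw of $W$ costs $2rt$ queries. By Hoeffding, averaging $O(r^2\log(1/\delta)/\eps^2)$ copies of $W$ gives error $\eps$ with probability at least $1-\delta$. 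The resulting total is $O(tr^3\log(1/\delta)/\eps^2)$, which is still off by a factor of $r$.

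To remove this factor, pick a uniformly random $k\in[r]$ in each trial and output $rZ_kZ_k'$. This is still unbiased for $p(x)$, is bounded by $r$, and costs only $2t$ queries per trial. Averaging $O(r^2\log(1/\delta)/\eps^2)$ trials then yields the claimed $O\big(tr^2\log(1/\delta)/\eps^2\big)$ non-adaptive queries. All samples are drawn before any query is made, so the algorithm remains non-adaptive.
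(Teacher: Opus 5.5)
Your proposal is correct, and your final estimator gives the claimed bound. You choose $k\in[r]$ uniformly and draw two independent characters $S,S'$ with probabilities $|\widehat{a_k}(\cdot)|/\norm{\widehat{a_k}}_{\ell_1}$. The output $rZ_kZ_k'$ then has mean $p(x)$, lies in $[-r,r]$, costs at most $2t$ queries, and uses randomness independent of $x$. Hoeffding with $O(r^2\log(1/\delta)/\eps^2)$ trials therefore finishes, non-adaptively. Your two earlier estimators are superseded and can be dropped.

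The paper takes a slightly different route. It first proves
$\norm{\widehat p}_{\ell_1}\le\sum_k\norm{\widehat{a_k}*\widehat{a_k}}_{\ell_1}\le\sum_k\norm{\widehat{a_k}}_{\ell_1}^2\le r$.
It then applies a generic Grolmusz-type lemma to $p$ itself: sample $S$ with probability proportional to $|\widehat p(S)|$, paying at most $2t$ queries per sample since $\deg p\le 2t$, with the accuracy rescaled by $r$.

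Your estimator is the same sampling idea applied to the uncollapsed expansion $p=\sum_k\sum_{S,S'}\widehat{a_k}(S)\widehat{a_k}(S')\chi_S\chi_{S'}$. Its coefficient mass $\sum_k\norm{\widehat{a_k}}_{\ell_1}^2$ is exactly what the paper's convolution inequality controls. So you avoid that inequality and never need to form $\widehat p$.

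The paper's route, in turn, is modular: one lemma covers every polynomial with small Fourier $\ell_1$-norm. It also profits from cancellations in $\widehat p$. The same lemma gives $O(t\norm{\widehat p}_{\ell_1}^2\log(1/\delta)/\eps^2)$ queries. Here $\norm{\widehat p}_{\ell_1}$ is never larger than your range bound $r\max_k\norm{\widehat{a_k}}_{\ell_1}^2$, and may be much smaller. That remains true if you improve the range bound to $\sum_k\norm{\widehat{a_k}}_{\ell_1}^2$ by picking $k$ with probability proportional to $\norm{\widehat{a_k}}_{\ell_1}^2$.
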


The kind of simulation of \cref{cor:simnonadapt} is incomparable with the one of \cref{con:needforstructure}, as we only use parallel classical queries and simulate the algorithm everywhere, but we only succeed with high probability.

\vspace{0.3cm}
\noindent\textbf{Sketch of the proof.} Given a polynomial $p:\{-1,1\}^n\to \R,$ by the triangle inequality one has that $\norm{p}_{\cb}= \norm{\widehat p}_{\ell_1}$. If the degree of the polynomial is $1,$ then one has that $\norm{p}_{\cb}\leq \norm{\widehat p}_{\ell_1}$, which is an optimal functional inequality. Using this, and the remark that the amplitudes of quantum query algorithms as in \cref{fig:QQAnonadaptive} are restrictions of degree 1 polynomials, we prove \cref{prop:ell1boundamplitudes}. 

To prove \cref{cor:simnonadapt}, we approximate $p(x)$, for a polynomial $p:\{-1,1\}^n\to \R$ of degree $t$  $\norm{\widehat p}_{\ell_1}\leq r$, and an element $x\in\{-1,1\}^n$. To do that, we sample $S_1,\dots,S_T$ from the (renormalized) distribution defined by the absolute values of the Fourier coefficients, and define 
\begin{equation*}
    \widetilde p(x)=\sum_{i\in [T]}\widehat p(S_i)\chi_{S_i}(x).
\end{equation*}
We have that $\mathbb E[\widetilde p(x)]=p(x)$. Thus, thanks to the Hoeffding bound, we have that $|\widetilde p(x)-p(x)|\leq \eps$ with high probability when $T=O(\frac{r^2}{\eps^2}\log(\frac{1}{\delta}))$. Finally, we note that one can evaluate $\widetilde p(x)$ by only querying $tT=O(t\frac{r^2}{\eps^2}\log(\frac{1}{\delta}))$ entries of $x.$ Therefore, the total query complexity of the process is $tT=O(t\frac{r^2}{\eps^2}\log(\frac{1}{\delta})),$ as claimed. 

\vspace{0.3cm}
\noindent\textbf{Comments about \cref{prop:ell1boundamplitudes,cor:simnonadapt}.} \cref{prop:ell1boundamplitudes} shows a striking separation between quantum algorithms with and without adaptivity. This is because quantum query algorithms with just 2 queries and one round of adaptivity can encode forrelation in an amplitude. Hence, with a single round of adaptivity, there are amplitudes with $\norm{\widehat a}_{\ell_1}=(n/2)^{1/2}$, which sharply contrasts with the amplitudes of non-adaptive quantum query algorithms, that satisfy $\norm{\widehat a}_{\ell_1}\leq 1$ by \cref{prop:ell1boundamplitudes}.

Regarding \cref{cor:simnonadapt}, during the preparation of this manuscript we became aware of two other works that prove related results \cite{AA2026boundedadaptivity,liu2026parallel}.
Blanc, Docter, Strassle, and Tan showed that if $p$ is the acceptance probability of a quantum query algorithm with $r-1$ rounds of adaptivity that makes $t$ parallel queries in each query round, then there is a classical algorithm that makes $(t \log(1/\delta)/\eps)^{O(r)}$ queries and $\eps$-approximates $f$ on a $(1-\delta)$-fraction of inputs~\cite{AA2026boundedadaptivity}. Liu and Mutreja show the same result, but with $(rt/(\eps\sqrt{\delta}))^{O(4^r)}$ classical queries~\cite{liu2026parallel}. Both of these results imply efficient classical simulation for quantum query algorithms with a constant number of rounds of adaptivity, while our \cref{cor:simnonadapt} only works for non-adaptive quantum query algorithms. However, we have decided to include our result because of the simplicity of our proof and because it simulates with non-adaptive classical queries.\footnote{\cite{AA2026boundedadaptivity} communicates that in a future work they will also make their simulation algorithm use non-adaptive classical queries for the case of non-adaptive quantum query algorithms.} 
\begin{figure}[h!]
\centering
\hspace{-4cm}
\begin{tikzpicture}[every node/.style={font=\LARGE}]

  \def\ycenter{13.25}


  \draw (10,15.75) rectangle (12.5,10.75) node[pos=.5] {$U_1$};

  \draw (6.25,15.75) rectangle (8.75,12);
  \node at (7.5,\ycenter) {$O_{x}^{\otimes t}$};

  \draw (13.75,15.75) rectangle (15.25,14.25);

  \draw (15,14.75) arc (0:180:0.5); 
  \draw[->] (14.5,14.85) -- ++(0.5,0.6); 

  \draw (13.75,13.75) rectangle (15.25,12.25);
  \draw (15,12.75) arc (0:180:0.5); 
  \draw[->] (14.5,12.85) -- ++(0.5,0.6); 

  \draw (13.75,11.75) rectangle (15.25,10.25);
  \draw (15,10.75) arc (0:180:0.5); 
  \draw[->] (14.5,10.85) -- ++(0.5,0.6); 

  \foreach \xA/\xB in {
    5/6.25, 8.75/10, 12.5/13.75,
    5/6.25, 8.75/10, 12.5/13.75,
  } {
    \draw (\xA,13.25) -- (\xB,13.25);
  }

  \draw (5,15.25) -- (6.25,15.25);
  \draw (8.75,15.25) -- (10,15.25);
  \draw (12.5,15.25) -- (13.75,15.25);

  \draw (5,11.25) -- (10,11.25);
  \draw (12.5,11.25) -- (13.75,11.25);

  \node at (4.5,13.25) {$|u\rangle$};

\end{tikzpicture}

\caption{Non-adaptive quantum query algorithms.}
\label{fig:QQAnonadaptive}
\end{figure}

\section{Preliminaries}

\textbf{Notation. } Throughout, we use the following notation. We write
\[
[n]=\{1,\ldots,n\},\qquad
[n]_0=\{0\}\cup[n].
\]
For $x\in\mathbb{R}^n$, let
\[
O_x=\operatorname{Diag}(x,1^n),
\]
where
\[
1^n=(1,\ldots,1)\in\mathbb{R}^n.
\]
More generally, for $x\in\mathbb{R}$, we write
\[
x^n=(x,\ldots,x)\in\mathbb{R}^n.
\]
The operator norm of a linear operator $A$ is
\[
\|A\|_{\op}
=
\sup_{\|v\|_{\ell_2}=1}\|Av\|_{\ell_2}.
\]
For a polynomial $p:\R^n\to \R$, its completely bounded norm is given by 
\begin{equation}\label{eq:cbnorm2}
    \norm{p}_{\cb}:=\sup_{\substack{m\in\N,\, X(i)\in\R^{m\times m}\\ \norm{X(i)}_{\op}\leq 1}} \norm{\sum_{s\in [t]_0}\sum_{\substack{\ind i\in [n]^s\\\ i_1\leq i_2\leq \dots \leq i_s}}p_{\ind i} X(i_1)\dots X(i_s)}_{\op},
\end{equation}
Finally, for $x\in\mathbb{R}^n$ and $t\in\mathbb{N}$, we define $x^{\otimes t}$ to be the diagonal of the matrix $\operatorname{Diag}(x)^{\otimes t}$; and for a multi-index $\ind i\in[n]^t$, we write
\[
S_{\ind i}
=
\{j\in[n]:\,j\text{ occurs an odd number of times in }\ind i\}.
\]

\textbf{Fourier analysis.} We next recall some basic facts about Fourier analysis on the Boolean cube. For every subset $S\subseteq[n]$, define the associated \emph{character} by
\[
\chi_S(x)=\prod_{i\in S}x_i,\qquad x\in\{-1,1\}^n.
\]
The characters form an orthonormal basis with respect to the uniform probability measure on the Boolean cube:
\[
\mathbb{E}_x[\chi_S(x)\chi_{S'}(x)]
=
\delta_{S,S'},
\]
where
\[
\mathbb{E}_x[f(x)]
=
\frac{1}{2^n}
\sum_{x\in\{-1,1\}^n}f(x).
\]
Every function $p:\{-1,1\}^n\to\mathbb{R}$ admits the Fourier expansion
\[
p(x)
=
\sum_{S\subseteq[n]}
\widehat p(S)\chi_S(x),
\]
where
\[
\widehat p(S)
=
\mathbb{E}_x[p(x)\chi_S(x)].
\]
Parseval's identity states that 
\[
\mathbb E_x[|p(x)|^2]=\sum_{S\subseteq [n]}|\widehat p(S)|^2.
\]
The variance of $p$ is
\[
\operatorname{Var}[p]
=
\mathbb{E}_x\!\left[(p(x)-\mathbb{E}_x[p(x)])^2\right]
=
\sum_{\emptyset\neq S\subseteq[n]}|\widehat p(S)|^2.
\]
The influence of the $i$th variable is
\[
\operatorname{Inf}_i[p]
=
\frac14\mathbb{E}_x\!\left[(p(x)-p(x^{\oplus i}))^2\right]
=
\sum_{i\ni S}|\widehat p(S)|^2,
\]
where $x^{\oplus i}$ is obtained from $x$ by flipping its $i$th coordinate. The maximum influence of $p$ is
\[
\operatorname{MaxInf}[p]
=
\max_{i\in[n]}\operatorname{Inf}_i[p].
\]
For functions $p:\{-1,1\}^n\to\R$ defined on the Boolean cube, we also consider the norms
\[
\|p\|_{\infty}
=
\max_{x\in\{-1,1\}^n}|p(x)|, \quad  \norm{\widehat p}_{\ell_q}=\Big(\sum_{S\subseteq [n]}|\widehat p(S)|^q\Big)^{1/q}.
\]
For these polynomials, the completely bounded norm reads as \[
\|p\|_{\cb}
=
\sup_{\|X(i)\|_{\op}\le1,\;i\in[n]}
\left\|
\sum_{\substack{S=\{s_1,\ldots,s_t\}\subseteq[n]\\
s_1<\cdots<s_t}}
\widehat p(S)\,
X(s_1)\cdots X(s_t)
\right\|_{\op}.
\]

\textbf{Multilinear forms and block-multilinear polynomials.} A $t$-linear form is a function
\[
T:\prod_{i=1}^t\{-1,1\}^n\to\mathbb{R}
\]
which is linear with respect to each factor. Every such function can be written as
\[
T(x_1,\ldots,x_t)
=
\sum_{\ind i\in[n]^t}
T_{\ind i}\,
x_1(i_1)\cdots x_t(i_t),
\]
where $T_{\ind i}\in\mathbb{R}$ for every $\ind i\in [n]^t$. The completely bounded norm of a form $T$ is the completely bounded norm of \cref{eq:cbnorm2} when the variables are ordered as $$x_1(1),\dots,x_1(n),x_2(1),\dots,x_2(n),\dots,x_t(1),\dots,x_t(n).$$
In other words, 
\[
\|T\|_{\cb}
=
\sup_{\|X_r(j)\|_{\op}\le1,\;r\in[t],\,j\in[n]}
\left\|
\sum_{\ind i\in[n]^t}
T_{\ind i}\,
X_1(i_1)\cdots X_t(i_t)
\right\|_{\op}.
\]

A function
\[
p:\prod_{r=1}^t\{-1,1\}^n\to\mathbb{R}
\]
is said to be block multilinear if
\[
p(x_1,\ldots,x_t)
=
\sum_{\ind i\in[n]_0^t}
p_{\ind i}\,
x_1(i_1)\cdots x_t(i_t),
\]
where we adopt the convention that $x_r(0)=1$ for every $r\in[t]$.

Note that the variance and the influence of the variable $x_s(i)$ (for $s\in [t]$ and $i\in [n]$) on $p$ are respectively 
\begin{align}\label{inf_block_mult}
\operatorname{Var}[p]=\sum_{\substack{\ind i\in[n]_0^t\\ \ind i\neq (0\cdots0)}}
p_{\ind i}^2,\hspace{0.5 cm} \operatorname{Inf}_{x_s(i)}[p]=\sum_{\substack{\ind i\in[n]_0^t\\ i_s=i}}
p_{\ind i}^2.
\end{align}

The completely bounded norm of a block-multilinear polynomial $p$ is the completely bounded norm of \cref{eq:cbnorm2} when the variables are ordered as $$x_1(1),\dots,x_1(n),x_2(1),\dots,x_2(n),\dots,x_t(1),\dots,x_t(n).$$
Thus, the completely bounded norm of these polynomials can be expressed as
\[
\|p\|_{\cb}
=
\sup_{\substack{\|X_r(i)\|_{\op}\le1\\ X_r(0)=\Id}}
\left\|
\sum_{\ind i\in[n]_0^t}
p_{\ind i}\,
X_1(i_1)\cdots X_t(i_t)
\right\|_{\op}.
\]

\textbf{Algorithms.} We say that a classical query algorithm is non-adaptive if the choices of what queries to make do not depend on each other; otherwise it is adaptive. We say that an algorithm is deterministic if it does not use randomness as resource; otherwise it is randomized.

\subsection{Refinements of the polynomial method}
In this section, we review the refinement of the polynomial method for quantum query algorithms in terms of completely bounded polynomials from \cite{QQA=CBF}. We also propose similar refinements of the polynomial method for quantum query algorithms with bounded adaptivity that may be of independent interest.

\subsubsection{Standard quantum query algorithms} Here, we consider standard quantum query algorithms, as in \cref{fig:QQA}. In this case, the acceptance probability (the probability of outputting 1) can be written as 
\begin{equation}\label{eq:QQA}
    p(x)=\langle u, (O_x\otimes \Id_d)A_{1}\dots A_{2t-1} (O_x\otimes \Id_d) u\rangle,
\end{equation}
for some $n,d\in\mathbb N$, some unit vector $u$ of dimension $2nd$,  every $x \in \{-1,1\}^n$, and some square matrices  $A_1,\dots, A_{2t-1}$ of dimension $2nd$ with operator norm at most~1.

With respect to \cref{fig:QQA}, the matrices $A_s^T$ and $A_{2s-i}$ correspond to $U_s$ for $1\leq s\leq t-1$, and $A_t=U_t^T\Pi_1U_t$ where $\Pi_1$ is the projector of the final measurement $\{\Pi_0,\Pi_1\}$ that determines the probability of outputting 1.

For these algorithms, Arunachalam, Briët and Palazuelos proved the following refinement of the polynomial method \cite{QQA=CBF}. We include its proof for completeness. 
\begin{theorem}\label{theo:cbmethodstd}
    Let $p:\{-1,1\}^n\to [0,1]$ be the acceptance probability of a quantum algorithm that makes $t$ queries. Then, there exists a $2t$-linear form $T:\mathbb R^{2n}\times\dots\times\R^{2n}\to \R$ such that 
    \begin{enumerate}
        \item $p(x)=T((x,1^n), \dots,(x,1^n))$  for every $x\in\{-1,1\}^n$, \label{item:QQAstd1}
        \item $\norm{T}_{\cb}\leq 1$. \label{item:QQA2}
    \end{enumerate}
\end{theorem}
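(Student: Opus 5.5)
We read the $2t$-linear form directly off \cref{eq:QQA}. Write $O_x\otimes \Id_d=\sum_{j\in[2n]} y(j)\,(E_{jj}\otimes \Id_d)$ with $y=(x,1^n)$, where $E_{jj}$ is the diagonal matrix unit. Then \cref{eq:QQA} becomes
\[
p(x)=\langle u,(O_x\otimes\Id_d)A_1(O_x\otimes\Id_d)A_2\cdots A_{2t-1}(O_x\otimes\Id_d)u\rangle ,
\]
with $2t$ occurrences of the oracle. Here we read \cref{eq:QQA} with the oracle interleaved between all the $A_s$, which is what a $t$-query algorithm produces once $U_t^T\Pi_1U_t$ is absorbed in the middle. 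Expanding each oracle factor by multilinearity, we define for $\ind j\in[2n]^{2t}$
\[
T_{\ind j}:=\langle u,(E_{j_1j_1}\otimes\Id_d)A_1(E_{j_2j_2}\otimes\Id_d)A_2\cdots A_{2t-1}(E_{j_{2t}j_{2t}}\otimes\Id_d)u\rangle ,
\]
and set $T(y_1,\dots,y_{2t})=\sum_{\ind j}T_{\ind j}\,y_1(j_1)\cdots y_{2t}(j_{2t})$. Then $T$ is $2t$-linear, and item \ref{item:QQAstd1} holds by construction since $T(y,\dots,y)$ reproduces the product above.

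For item \ref{item:QQA2}, we take arbitrary contractions $X_r(j)\in\R^{m\times m}$ and must bound $\norm{\sum_{\ind j}T_{\ind j}X_1(j_1)\cdots X_{2t}(j_{2t})}_{\op}\le 1$. The plan is to rewrite this sum as a compression of a product of contractions:
\[
\sum_{\ind j}T_{\ind j}X_1(j_1)\cdots X_{2t}(j_{2t})=(u^*\otimes\Id_m)\,D_1(A_1\otimes\Id_m)D_2\cdots(A_{2t-1}\otimes\Id_m)D_{2t}\,(u\otimes\Id_m),
\]
where $D_r=\sum_{j\in[2n]}E_{jj}\otimes\Id_d\otimes X_r(j)$ is block diagonal, after reordering tensor factors so that the $\R^m$ factor comes last. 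To verify the identity, expand each $D_r$ and use the fact that $\langle u,\cdot\,u\rangle$ acting on the first tensor factors produces exactly the scalars $T_{\ind j}$, each multiplying the matrix $X_1(j_1)\cdots X_{2t}(j_{2t})$ on the $\R^m$ factor. Each $D_r$ is block diagonal with contractive blocks, so $\norm{D_r}_{\op}\le\max_j\norm{X_r(j)}_{\op}\le1$. Also $\norm{A_s\otimes\Id_m}_{\op}=\norm{A_s}_{\op}\le1$, and $u\otimes\Id_m:\R^m\to\R^{2nd}\otimes\R^m$ is an isometry because $u$ is a unit vector, so its adjoint is a contraction. Submultiplicativity of the operator norm then gives the bound, and taking the supremum over $m$ and over the $X_r(j)$ yields $\norm{T}_{\cb}\le1$.

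The only delicate point is bookkeeping. We must check that the identity holds with the correct placement of the $X_r(j)$ in the tensor product. We must also check that the matrices $A_s$ in \cref{eq:QQA} are genuine contractions, in particular the middle factor $U_t^T\Pi_1U_t$, which is a contraction since $\Pi_1$ is a projector. Neither step should pose a real obstacle.
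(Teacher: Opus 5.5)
Your proposal is correct and follows essentially the same route as the paper: define $T$ by replacing the $r$-th oracle factor with $\Diag(y_r)\otimes\Id_d$, then bound $\norm{T}_{\cb}$ by writing the matrix evaluation as a compression by $u\otimes\Id_m$ of a product of block-diagonal contractions and the factors $A_s\otimes\Id_m$. Your explicit reading of the oracle factors as interleaved between all the $A_s$, and your check that $U_t^{\mathsf T}\Pi_1U_t$ is a contraction, are bookkeeping that the paper leaves implicit.
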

\begin{proof}
        Following the notation of \cref{eq:QQA}, the tuple $(u,A_1,\dots,A_{2t-1})$ defines a $2t$-linear form 
        $$T:\underbrace{\mathbb R^{2n}\otimes\dots\otimes \mathbb R^{2n}}_{2t \text{ times}}\to \mathbb R$$ via 
    \begin{equation*}
        T(x_1,\dots,x_{2t})=\langle u,(\Diag(x_1)\otimes \Id_d)A_{1}\dots A_{2t-1} (\Diag(x_{2t})\otimes \Id_d)u\rangle.
    \end{equation*}
   We write $T$ as
    \begin{equation*}
        T(x_1,\dots,x_{2t})=\sum_{\ind i\in [2n]^{2t}}T_{\ind i}x_1(i_1)\dots x_{2t}(i_{2t}).
    \end{equation*}
        By the definition of $T$, \cref{item:QQAstd1} is satisfied. Furthermore, we note that for matrices $X_s(i)$ of operator norm at most~1, we have that 
        \begin{align*}
            &\sum_{\ind i\in [2n]^{2t}}T_{\ind i}X_1(i_1)\dots X_{2t}(i_{2t})\\
            &=(\langle u|\otimes \Id_m ), (\Diag(X_1(1),\dots,X_1(2n))\otimes \Id_d)(A_1 \otimes \Id _m) \dots\dots (|u\rangle\otimes \Id_m).
        \end{align*}

        Finally, as all of the factors in the product of the bottom line of the last equation have operator norm at most~1, \cref{item:QQA2} follows.
    \end{proof}

\subsubsection{Quantum algorithms that query disjoint sets of inputs.} In this section we consider quantum query algorithms for computing functions $f:\{-1,1\}^n\times\dots\times  \{-1,1\}^n\to \{-1,1\}$, where the input is divided in $t$ disjoint blocks of variables 
. In this case, the quantum query algorithm makes $t$ queries, one to each block of inputs, as in \cref{fig:QQAblocks}. For instance, $t$-fold forrelation can be phrased as an amplitude of one of these algorithms \cite{aaronson2015forrelation}. The amplitudes of the state prepared by these algorithms can be written as
\begin{equation}
    a(x_1,\dots,x_t)=\langle u, (O_{x_1}\otimes \Id_d)A_{1}\dots A_{t-1} (O_{x_t}\otimes \Id_d) A_tv\rangle,
\end{equation}
for some $d\in\mathbb N$, $x_j \in \{-1,1\}^n$, some unit vectors $u$ and $v$ of dimension $2nd$, and some square matrices  $A_1,\dots, A_{t}$ of dimension $2nd$ with operator norm at most~1. With respect to \cref{fig:QQAblocks}, the matrices $A_s$ correspond to $U_s$ for $1\leq s\leq t$, and $\ket v$ is a vector of the final measurement. The acceptance probabilities of these algorithms can be written as a sum of the squares of such amplitudes.

For these algorithms, Bansal, Sinha and de Wolf observed, inspired by the above result, the following refinement of the polynomial method \cite{Bansal:2022}. Its proof is essentially the same as the one of \cref{theo:cbmethodstd}, so we omit it. 

\begin{proposition}\label{prop:cbmethodblock}
    Let $a:\{-1,1\}^n\times\dots\times  \{-1,1\}^n\to \mathbb R$ be the amplitude of a quantum algorithm that makes $t$ queries to disjoint blocks. Then, there exists a $t$-linear form $T:\mathbb R^{2n}\times\dots\times\R^{2n}\to \R$ such that 
    \begin{enumerate}
        \item $a(x_1,\dots,x_t)=T((x_1,1^n),\dots,(x_t,1^n))$ for every $x_1,\dots,x_t\in\{-1,1\}^n$,
        \item $\norm{T}_{\cb}\leq 1$.
    \end{enumerate}

\end{proposition}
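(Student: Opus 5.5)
We mirror the proof of \cref{theo:cbmethodstd}, writing the amplitude as a product of contractions and reading off a multilinear form whose matrix evaluation has the same structure. Using the expression
\[
a(x_1,\dots,x_t)=\langle u, (O_{x_1}\otimes \Id_d)A_{1}\cdots A_{t-1} (O_{x_t}\otimes \Id_d) A_t v\rangle,
\]
define $T:\R^{2n}\times\dots\times\R^{2n}\to\R$ by
\[
T(y_1,\dots,y_t)=\langle u,(\Diag(y_1)\otimes \Id_d)A_1\cdots A_{t-1}(\Diag(y_t)\otimes\Id_d)A_t v\rangle .
\]
This $T$ is clearly $t$-linear, since each $y_s$ enters linearly through $\Diag(y_s)$. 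Expanding $\Diag(y_s)=\sum_{j\in[2n]}y_s(j)E_{jj}$ gives coefficients
\[
T_{\ind i}=\langle u,(E_{i_1i_1}\otimes\Id_d)A_1\cdots(E_{i_ti_t}\otimes\Id_d)A_tv\rangle .
\]
Item 1 is then immediate, because $O_{x_s}=\Diag(x_s,1^n)=\Diag((x_s,1^n))$.

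For item 2, we fix $m$ and matrices $X_s(j)\in\R^{m\times m}$ with $\norm{X_s(j)}_{\op}\le1$, and substitute $X_s(i_s)$ for $y_s(i_s)$. The key identity to verify is
\[
\sum_{\ind i}T_{\ind i}X_1(i_1)\cdots X_t(i_t)=(\langle u|\otimes\Id_m)\,D_1\,(A_1\otimes\Id_m)\,D_2\cdots D_t\,(A_t\otimes\Id_m)\,(|v\rangle\otimes\Id_m),
\]
where $D_s=\sum_{j}E_{jj}\otimes\Id_d\otimes X_s(j)$. This follows by expanding the right-hand side: the scalar factors $\langle u|\cdots|v\rangle$ split off in the first tensor factors, and the $X_s(j)$ accumulate in order in the last factor. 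Each $D_s$ is block diagonal with blocks $X_s(j)$ (tensored with $\Id_d$), so $\norm{D_s}_{\op}=\max_j\norm{X_s(j)}_{\op}\le1$. In addition, $\norm{A_s\otimes\Id_m}_{\op}\le1$, and since $u,v$ are unit vectors, $\norm{\langle u|\otimes\Id_m}_{\op}=\norm{|v\rangle\otimes\Id_m}_{\op}=1$. Submultiplicativity of the operator norm then gives $\norm{T}_{\cb}\le1$.

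The only point needing care is the bookkeeping in this tensor identity: the ordering of tensor factors, and confirming that the trailing $A_t v$ (rather than a second copy of $u$, as in \cref{theo:cbmethodstd}) fits the same pattern. Since $v$ is a unit vector, it causes no difficulty.
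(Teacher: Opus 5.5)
Your proposal is correct and takes the paper's route. The paper omits this proof as being essentially that of \cref{theo:cbmethodstd}: define $T$ by replacing each $O_{x_s}$ with $\Diag(y_s)\otimes\Id_d$, then bound $\norm{T}_{\cb}$ by writing the matrix evaluation as a product of contractions, exactly as you do, and your bookkeeping with $D_s=\sum_j E_{jj}\otimes\Id_d\otimes X_s(j)$ (including the trailing $A_t v$) is sound and more explicit than the paper's sketch.
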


\subsubsection{Quantum query algorithms with bounded adaptivity} In this section we propose a refinement of the polynomial method for quantum query algorithms that have at most $r-1$ rounds of adaptivity, as in \cref{fig:QQAboundedadativity}. Let $t_1,\dots,t_r$ be the number of queries that these algorithms make in $t$-rounds, and let $t=\sum_{i\in [r]}t_i$. In this case, the acceptance probability can be written as 
\begin{equation}\label{eq:QQAboundedadap}
    p(x)=\langle u, (O_x^{\otimes t_1}\otimes \Id_{d_1})A_{1}\dots  (O_x^{\otimes t_r}\otimes \Id_{d_r}) A_r (O_x^{\otimes t_r}\otimes \Id_{d_r})  \dots A_{2r-1} (O_x^{\otimes t_1}\otimes \Id_{d_1}) u\rangle,
\end{equation}
for some $d,d_1,\dots,d_r\in\mathbb N$ satisfying $(2n)^{t_s}d_s=d$ for all $s\in [r]$, some unit vector $u$ of dimension $d$, and some square matrices  $A_1,\dots, A_{2t-1}$ of dimension $d$ with operator norm at most~1. With respect to \cref{fig:QQAboundedadativity}, the matrices $A_s$ and $A_{2s-i}$ correspond to $U_s$ for $1\leq s\leq r-1$, and $A_t=U_r^T\Pi_1U_r$ where $\Pi_1$ is the projector of the final measurement $\{\Pi_0,\Pi_1\}$ that determines the probability of outputting 1. 

For these algorithms, we provide a new refinement of the polynomial method, inspired by \cref{theo:cbmethodstd}. The proof is very similar to the one of \cref{theo:cbmethodstd} so we omit it.

\begin{proposition}\label{theo:cbmethodboundedadapt}
    Let $p:\{-1,1\}^n\to [0,1]$ be the acceptance probability of a quantum algorithm that makes $t_1,\dots,t_r$ queries with $r-1$ rounds of adaptivity. Let $t=\sum_{i\in [r]}t_i$. Then, there exists a $2r$-linear form $T:\mathbb R^{2nt_1}\times\dots\times \mathbb R^{2nt_r}\times \mathbb R^{2nt_r}\times\dots\times\R^{2nt_1}\to \R$ such that 
    \begin{enumerate}
        \item $p(x)=T((x,1^n)^{\otimes t_1},\dots,(x,1^n)^{\otimes t_r},(x,1^n)^{\otimes t_r},\dots,(x,1^n)^{\otimes t_1})$ for every $x\in\{-1,1\}^n$, 
        
        \item $\norm{T}_{\cb}\leq 1$.
        
    \end{enumerate}
\end{proposition}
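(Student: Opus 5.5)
We mimic the proof of \cref{theo:cbmethodstd}. Starting from the representation \cref{eq:QQAboundedadap} of the acceptance probability, the tuple $(u,A_1,\dots,A_{2r-1})$ defines a $2r$-linear form by replacing each occurrence of $O_x^{\otimes t_s}$ with a diagonal matrix whose diagonal is an independent variable vector. Concretely, for $y_1\in\R^{(2n)^{t_1}},\dots,y_r\in\R^{(2n)^{t_r}}$ and $y_r',\dots,y_1'$ of the same dimensions, set
\begin{equation*}
T(y_1,\dots,y_r,y_r',\dots,y_1')=\langle u,(\Diag(y_1)\otimes\Id_{d_1})A_1\cdots(\Diag(y_r)\otimes\Id_{d_r})A_r(\Diag(y_r')\otimes\Id_{d_r})\cdots A_{2r-1}(\Diag(y_1')\otimes\Id_{d_1})u\rangle .
\end{equation*}
(The dimension of the $s$-th factor is $(2n)^{t_s}$, which is what the statement's notation $\R^{2nt_s}$ should be read as, since $(x,1^n)^{\otimes t_s}$ lives there.) Each factor enters linearly, so $T$ is $2r$-linear, and we expand it as $\sum_{\ind i}T_{\ind i}\,y_1(i_1)\cdots y_1'(i_{2r})$ over the product of index sets.

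The first property is then immediate: since $O_x=\Diag(x,1^n)$, we have $O_x^{\otimes t_s}=\Diag((x,1^n)^{\otimes t_s})$ by the notational convention that $z^{\otimes t}$ is the diagonal of $\Diag(z)^{\otimes t}$. Plugging $y_s=y_s'=(x,1^n)^{\otimes t_s}$ recovers exactly \cref{eq:QQAboundedadap}.

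For the cb bound, take any $m$ and matrices $X_k(i)\in\R^{m\times m}$ of operator norm at most one. Exactly as in \cref{theo:cbmethodstd}, one checks by expanding entries that
\begin{equation*}
\sum_{\ind i}T_{\ind i}X_1(i_1)\cdots X_{2r}(i_{2r})=(\langle u|\otimes\Id_m)\,(D_1\otimes\Id_{d_1})'\,(A_1\otimes\Id_m)\cdots(D_{2r}\otimes\Id_{d_1})'\,(|u\rangle\otimes\Id_m),
\end{equation*}
where $D_k=\Diag(X_k(1),\dots,X_k((2n)^{t_s}))$ is block diagonal and the prime indicates the appropriate reordering of tensor factors so that the $m$-dimensional space sits next to the index register. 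Block-diagonal matrices with contraction blocks are contractions, tensoring with identities preserves operator norm, $\|A_j\otimes\Id_m\|_{\op}\le1$, and $\||u\rangle\otimes\Id_m\|_{\op}=\|u\|=1$. Submultiplicativity gives operator norm at most one, hence $\norm{T}_{\cb}\le1$.

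The only step requiring care is this last identity: verifying that the index bookkeeping (the $(2n)^{t_s}$ register indexed by $i$ times the workspace $d_s$, with the extra $\R^m$ factor) correctly produces the ordered product $X_1(i_1)\cdots X_{2r}(i_{2r})$ weighted by $T_{\ind i}$. This is a routine entrywise computation identical in form to the standard case, since each query layer is still a single diagonal operator in a fixed basis; the rest is formal.
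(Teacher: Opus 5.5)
Your proposal is correct and follows exactly the route the paper intends. The paper omits this proof as being the same as that of \cref{theo:cbmethodstd}: replace each query layer $O_x^{\otimes t_s}\otimes\Id_{d_s}$ by $\Diag(y)\otimes\Id_{d_s}$, then bound the cb norm through the factorization with block-diagonal contractions. You are also right to read the factor dimensions as $(2n)^{t_s}$ rather than $2nt_s$, which corrects a typo in the statement.
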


\subsubsection{Non-adaptive quantum query algorithms} In this section, we propose a refinement of the polynomial method for non-adaptive quantum algorithms, as in \cref{fig:QQAnonadaptive}. By contrast with the case of bounded adaptivity, for our later applications we will focus on the amplitudes, and not the acceptance probability. For that reason, we state it on its own, but we omit the proof of the correspondent refinement to avoid redundancy. 
Then, the amplitudes of the state prepared by the algorithm can be written as
\begin{equation}
    a(x_1,\dots,x_t)=\langle u,(O_{x}^{\otimes t}\otimes \Id_d)Av\rangle,
\end{equation} 
for some $d\in\mathbb N$, some unit vectors $u$ and $v$ of dimension $2nd$, and a square matrix $A$ of dimension $2nd$ with operator norm at most~1. With respect to \cref{fig:QQAblocks}, the matrix $A$ corresponds to $U$, and $\ket v$ is a vector of the final measurement. Here, $A$ can be absorbed into $v$. The acceptance probabilities of these algorithms can be written as a sum of the squares of such amplitudes.

\begin{proposition}\label{prop:cbmethodnonadaptive}
    Let $a:\{-1,1\}^n\to \mathbb R$ be the amplitude of a quantum algorithm that makes $t$ queries in parallel, with no adaptivity. Then, there exists a linear form $T:\mathbb R^{(2n)^t}\to \R$ such that 
    \begin{enumerate}
        \item $a(x)=T((x,1^{n})^{\otimes t})$ 
        for every $x\in\{-1,1\}^n$,
        \item $\norm{T}_{\cb}\leq 1.$
       
    \end{enumerate}
\end{proposition}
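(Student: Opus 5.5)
We will argue exactly as in the proof of \cref{theo:cbmethodstd}, reading off $T$ directly from the circuit description. Since $A$ can be absorbed into $v$, the amplitude is
\begin{equation*}
a(x)=\langle u,(O_x^{\otimes t}\otimes \Id_d)\,w\rangle,\qquad w:=Av,\quad \norm{w}_{\ell_2}\leq 1.
\end{equation*}
Note that $O_x^{\otimes t}=\Diag((x,1^n))^{\otimes t}=\Diag\big((x,1^n)^{\otimes t}\big)$, using the convention that $y^{\otimes t}$ denotes the diagonal of $\Diag(y)^{\otimes t}$. This suggests defining the linear form $T:\R^{(2n)^t}\to\R$ by
\begin{equation*}
T(y)=\langle u,(\Diag(y)\otimes \Id_d)\,w\rangle=\sum_{j\in[(2n)^t]} T_j\, y(j),
\end{equation*}
with coefficients
\begin{equation*}
T_j=\sum_{k\in[d]}u(j,k)\,w(j,k),
\end{equation*}
where we index $\R^{(2n)^t d}=\R^{(2n)^t}\otimes\R^d$ by pairs $(j,k)$. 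With this definition, item~1, namely $a(x)=T((x,1^n)^{\otimes t})$, holds by construction.

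For item~2, the plan is to take arbitrary $X(j)\in\R^{m\times m}$ with $\norm{X(j)}_{\op}\leq1$ and factor the matrix $\sum_j T_j X(j)$ as a product of contractions, mirroring the factorization in the proof of \cref{theo:cbmethodstd}:
\begin{equation*}
\sum_{j} T_j X(j)=(\langle u|\otimes \Id_m)\,\Big(\sum_{j,k}|j,k\rangle\langle j,k|\otimes X(j)\Big)\,(|w\rangle\otimes \Id_m).
\end{equation*}
To verify this identity, expand the right-hand side entrywise; the $(j,k)$-diagonal sum collapses to $\sum_{j,k}u(j,k)w(j,k)X(j)=\sum_j T_jX(j)$.

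We then bound the three factors separately:
\begin{itemize}
\item The middle factor is block diagonal with blocks $X(j)$ (each repeated $d$ times), so its operator norm is $\max_j\norm{X(j)}_{\op}\leq 1$.
\item The outer factor $|w\rangle\otimes\Id_m$ has norm $\norm{w}_{\ell_2}\leq1$.
\item The outer factor $\langle u|\otimes \Id_m$ has norm $\norm{u}_{\ell_2}=1$.
\end{itemize}
Hence $\norm{\sum_jT_jX(j)}_{\op}\leq 1$, and taking the supremum over $m$ and over contractions $X(j)$ gives $\norm{T}_{\cb}\leq1$.

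There is no real obstacle here. The only point requiring care is bookkeeping: the query $O_x^{\otimes t}$ acting on $t$ registers must be read as a single diagonal matrix indexed by $[(2n)^t]$, so that the form is linear (degree one) in the tensor variable $y=(x,1^n)^{\otimes t}$ rather than $t$-linear. Once $T$ is defined on this enlarged index set, the cb estimate is the one-factor case of \cref{theo:cbmethodstd}.
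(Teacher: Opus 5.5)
Your proof is correct and is essentially the argument the paper intends. The paper omits this proof as a repetition of the one for \cref{theo:cbmethodstd}, and you carry out exactly that factorization through contractions in the one-factor case, reading the query $O_x^{\otimes t}$ as the single diagonal matrix $\Diag((x,1^n)^{\otimes t})$.
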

\begin{remark}

    For linear forms, we have that $\norm{T}_{\cb}=\norm{T}_\infty.$ However, we stated that result with arbitrary dimension $m\in \N$ to keep the same style as in \cref{theo:cbmethodstd,prop:cbmethodblock,theo:cbmethodboundedadapt}, where $\norm{T}_{\cb}$ is larger than $\norm{T}_\infty$ in general. 
\end{remark}

\section{Optimal root-influence conjecture for completely bounded block-multilinear polynomials} 

In this section, we prove the root-influence type inequality of~\cref{theo:OptimalAAforBMCB}, and two new simulation results of polynomials that exploit the functional inequalities of the kind  of~\cref{theo:OptimalAAforBMCB}. Furthermore, we also show that completely bounded  Bohnenblust--Hille constant for these polynomials is~1. 

\subsection{Proof of \cref{theo:OptimalAAforBMCB} and optimal Bohnenblust-Hille constant}
We start by proving a theorem that quickly implies \cref{theo:optimalcbBHBlockMult} and the optimal Bohnenblust-Hille inequality for block-multilinear completely bounded polynomials. 

\begin{theorem}\label{theo:optimalcbBHBlockMult}

Let $p:\{-1,1\}^n\times\dots\times\{-1,1\}^n\to\R$ be a block-multilinear polynomial of degree $t$. Then, for all $s\in [t]$, it is satisfied that  
\begin{equation}
\label{eq:optimalcbBHBlockMult}
    \norm{p}_{\cb}\geq   \sum_{i\in [n]_0} \left(\sum_{\ind  i\in [n]^t_0:i_s=i} |p_{\ind i}|^2\right)^\frac12. 
\end{equation}  
\end{theorem}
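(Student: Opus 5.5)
The plan is to generalize the construction sketched in the introduction for $s=1$. Fix $s\in[t]$ and split each multi-index as $\ind i=(\ind j,i,\ind k)$ with $\ind j\in[n]_0^{s-1}$ (blocks before $s$), $i=i_s$ and $\ind k\in[n]_0^{t-s}$ (blocks after $s$). For each $i\in[n]_0$ let $P_i$ be the array $(p_{\ind j,i,\ind k})_{\ind j,\ind k}$, so that the right-hand side of \cref{eq:optimalcbBHBlockMult} is $\sum_{i\in[n]_0}\norm{P_i}_F$. We lower bound $\norm{p}_{\cb}$ by a bilinear expression $\langle f,\sum_{\ind i}p_{\ind i}X_1(i_1)\cdots X_t(i_t)e\rangle$ for unit vectors $f,e$ and contractions $X_r(i)$ with $X_r(0)=\Id$. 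The aim is to make the $i$-th slice contribute exactly $\norm{P_i}_F$.

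\emph{Step 1: the outer blocks.} Work on $\mathcal H=\bigotimes_{r\neq s}\ell_2([n]_0)\otimes\mathcal K$, with one register per block $r\neq s$ and an auxiliary space $\mathcal K$. For $r\neq s$ and $j\ge1$, let $X_r(j)$ be the permutation matrix swapping $\ket0$ and $\ket j$ in register $r$, tensored with the identity elsewhere. These are orthogonal and symmetric, so $X_r(0)=\Id$ is consistent. Starting from a product vector with all registers in $\ket0$:
\begin{itemize}
\item the vectors $u_{\ind k}=X_{s+1}(k_{s+1})\cdots X_t(k_t)e$ are orthonormal: they are $\ket{\ind k}$ in the right registers;
\item similarly, $w_{\ind j}:=X_{s-1}(j_{s-1})\cdots X_1(j_1)f$ is $\ket{\ind j}$ in the left registers. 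This uses $X_r(j)^T=X_r(j)$.
\end{itemize}
The whole expression then becomes
\[
\sum_{i\in[n]_0}\sum_{\ind j,\ind k}p_{\ind j,i,\ind k}\langle w_{\ind j},X_s(i)u_{\ind k}\rangle .
\]

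\emph{Step 2: the middle block.} Define $v_i:=\sum_{\ind j,\ind k}p_{\ind j,i,\ind k}\,(\text{a vector encoding }\ket{\ind j}\text{ and }\ket{\ind k})$, chosen so that $\norm{v_i}=\norm{P_i}_F$. For $i\ge1$, take $X_s(i)$ to be an orthogonal map (a reflection suffices) sending $v_i/\norm{v_i}$ to $v_0/\norm{v_0}$. Then, exactly as in the sketch for $s=1$, each term equals $\norm{v_i}$, and summing over $i\in[n]_0$ gives the claim.

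\emph{Main obstacle.} The difficulty is the constraint $X_s(0)=\Id$ combined with the fact that the left and right registers are touched by different operators. With naive product vectors $f=e=\ket{0\cdots0}$, the $i=0$ term only sees $\langle w_{\ind j},u_{\ind k}\rangle$, which vanishes unless $\ind j=\ind k=0$. So the pairing between left labels $\ind j$ and right labels $\ind k$ has to be built into the outer vectors themselves. My first attempt is to make the $i=0$ slice produce $\norm{P_0}_F$ directly by choosing the outer vector as
\[
f=\frac{1}{\norm{P_0}_F}\sum_{\ind j,\ind k}p_{\ind j,0,\ind k}\ket{\ind j}_{L'}\ket{\ind k}_{R'}
\]
in copy registers $L',R'\subseteq\mathcal K$. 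The left swaps would then be redefined as controlled swaps that transfer the label from $L'$ into $L$, so that $\langle w_{\ind j},u_{\ind k}\rangle=p_{\ind j,0,\ind k}/\norm{P_0}_F$. I must check that all operators remain contractions and that the $i\ge1$ terms still reduce to inner products $\langle v_0/\norm{v_0},X_s(i)v_i\rangle$. If the bookkeeping fails, the fallback is to take the supremum over unit vectors more cleverly, using the operator norm rather than a single bilinear pairing. Once \cref{eq:optimalcbBHBlockMult} holds for each $s$, \cref{theo:OptimalAAforBMCB} follows by summing over $s\in[t]$, dividing by $t$, and using \eqref{inf_block_mult}; the terms with $i=0$ are nonnegative and can be dropped.
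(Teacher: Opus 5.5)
There is a genuine gap. It sits exactly at the ``main obstacle'' you flag, and it concerns the interior blocks $1<s<t$.

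\textbf{Step 2 is wrong as stated for $s\ge 2$.} Write $u_{\ind k}=X_{s+1}(k_{s+1})\cdots X_t(k_t)e$, $w_{\ind j}=X_{s-1}(j_{s-1})^{\mathsf T}\cdots X_1(j_1)^{\mathsf T}f$ and $v_{\ind j,i}=\sum_{\ind k}p_{\ind j,i,\ind k}u_{\ind k}$. Then the $i$-th slice is
\[
\sum_{\ind j}\langle w_{\ind j},X_s(i)v_{\ind j,i}\rangle=\Tr\bigl(X_s(i)M_i\bigr),\qquad M_i=\sum_{\ind j}v_{\ind j,i}\,w_{\ind j}^{\mathsf T}.
\]
This is a trace pairing, typically of rank larger than one, not a single inner product $\langle g,X_s(i)v_i\rangle$. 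So a reflection sending one unit vector to another does not make the slice equal to $\norm{P_i}_F$. That device only works when one side is a single vector: for $s=1$ (take $f=v_0/\norm{v_0}$) and, symmetrically, for $s=t$ (encode $P_0$ in $e$). The right tool is trace duality. If $\{u_{\ind k}\}$ and $\{w_{\ind j}\}$ are both orthonormal, then
\[
\sup_{O\ \text{orthogonal}}\Tr(OM_i)=\norm{M_i}_{S_1}=\norm{P_i}_{S_1}\ge\norm{P_i}_F .
\]

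\textbf{The repair for $i=0$ does not pass the check you postpone.} The slice $i=0$ is frozen at $\Tr(M_0)$ because $X_s(0)=\Id$. If $f$ carries $P_0$ and the left operators are swaps or controlled swaps, every $w_{\ind j}$ is a permuted copy of one entangled vector, and nothing makes them orthonormal. In the simplest version, with swaps acting on the register holding the copy of $\ind j$, a constant $P_0$ makes $f$ permutation-invariant, so all $w_{\ind j}$ coincide. Even with $\{u_{\ind k}\}$ orthonormal, the $i$-th slice is then at most $\norm{G^{1/2}P_i}_{S_1}$, where $G$ is the Gram matrix of the $w_{\ind j}$. This can be far below $\norm{P_i}_F$. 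So you fix slice $0$ by destroying the orthonormality the other slices need. The fallback of ``using the operator norm'' does not help either, since the operator norm is itself the supremum of such pairings.

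\textbf{The missing ingredient} (the paper's Steps 2--3) is to use general orthogonal matrices on the left, built in a nested way, so that $\{w_{\ind j}\}$ can be any prescribed orthonormal family.
\begin{itemize}
\item Given an orthonormal family $\mathcal F_0=(f^0_{\ind j})_{\ind j}$, choose orthogonal $X_k(m)$, $k<s$, $X_k(0)=\Id$, with
\[
X_k(m)f^0_{j_1,\dots,j_{k-1},m,0,\dots,0}=f^0_{j_1,\dots,j_{k-1},0,0,\dots,0}\quad\text{for all } j_1,\dots,j_{k-1}.
\]
Then $X_1(j_1)\cdots X_{s-1}(j_{s-1})f^0_{\ind j}=f^0_{0\cdots0}$, so taking $f=f^0_{0\cdots0}$ gives $w_{\ind j}=f^0_{\ind j}$.
\item For $m\ge 1$, take $X_s(m)$ orthogonal, sending a second arbitrary orthonormal family $\mathcal F_m$ onto $\mathcal F_0$ termwise.
\item The $m$-th slice becomes $\sum_{\ind j}\langle f^m_{\ind j},v_{\ind j,m}\rangle$. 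All $n+1$ families are free, $\mathcal F_0$ included, so the constraint $X_s(0)=\Id$ costs nothing.
\item Optimizing each family by trace duality gives $\norm{p}_{\cb}\ge\sum_m\norm{P_m}_{S_1}\ge\sum_m\norm{P_m}_F$.
\end{itemize}
This is exactly the mechanism that implements the pairing between left and right labels you correctly identified as necessary, while keeping both families orthonormal.
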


\begin{proof}
For a fixed $s\in [t]$, we write $\ind i=({\ind{i}_L},i_s,{\ind{i}_R})$ with ${\ind{i}_L}=(i_1,\dots, i_{s-1})$ and ${\ind{i}_R}=(i_{s+1},\dots, i_t)$.  We also write 
$$X_1(i_1)\cdots X_{s-1}(i_{s-1})=X_{L}({\ind{i}_L}) \mbox{ and }X_{s+1}(i_{s+1})\cdots X_d(i_{d})=X_R({\ind{i}_R}).$$
It may be that ${\ind{i}_L}$ or ${\ind{i}_R}$ are empty. Then $X_L$ or $X_R$ are just the identity. We will carry out the construction of the unitaries and unit vectors in the space $\mathcal{H}=\bigotimes _{l=1}^t\R^{n+1}.$ We divide the proof in 3 steps.

{\bf Step 1.} First, we define $e$ and $X_l$ for $l\geq s+1.$ 
To do so, consider the  orthonormal basis $$(e_{\ind i}:=e_{i_{1}} \otimes \cdots \otimes e_{i_{t}})_{\ind i\in[n]_0^t},$$ and set $e=e_0\otimes \cdots\otimes e_0$. Then, $X_{k}(i)$ acts only on the factor of $\mathcal{H}$ with $l=k$ (being the identity on the rest of the factors) and satisfies $X_k(i)e_0=e_{i}$.   
From the definition, we see that $(X_{R}({\ind{i}_R})e)_{{{\ind{i}_R}}}$ is an orthonormal set.

 Set $$v_{{\ind{i}_L},i_s}=\sum_{{\ind{i}_R}} p_{({\ind{i}_L},i_s,{\ind{i}_R})} X_R({\ind{i}_R}) e,$$ and observe that, by orthonormality 
\begin{equation}
\label{eq:orthonormalityv}\norm{v_{{\ind{i}_L},s}}_{\ell_2}= \left(\sum_{{\ind{i}_R}}  |p_{({\ind{i}_L},i_s,{\ind{i}_R})}|^2\right)^\frac12.\end{equation}

  {\bf Step 2.} Second, we claim that
\begin{equation}
\label{eq:step1}\norm{p}_{\cb}\geq \sum_{i_s=0}^n \sup _{\substack{\mathcal{F}=(f_{{\ind{i}_L}})_{{\ind{i}_L}}\\
\mathcal{F}\text{ orthonormal}\\|\mathcal{F}|=(n+1)^{s-1}}} \sum_{{\ind{i}_L}} \langle f_{{\ind{i}_L}}, v_{{\ind{i}_L},i_s}\rangle.\end{equation}

To see this, fix $n+1$ orthonormal sets $\mathcal{F}_0, \dots, \mathcal{F}_{n}$ of size $(n+1)^{s-1}$, and set $\mathcal{F}_m=(f^m_{{\ind{i}_L}})_{{\ind{i}_L}}$ for every $m=0,\ldots, n$. A simple computation shows that for any vector $f$ and any matrices  $X_1(i_1),\dots, X_s(i_s)$, we have that 
    \begin{align*}
        \langle f, p(X) e\rangle &=\sum_{i_s=0}^ n \left( \sum_{{\ind{i}_L}} \langle f, X_L({\ind{i}_L}) X_s(i_s) v_{{\ind{i}_L},i_s}\rangle\right) \\
         &=\sum_{i_s=0}^ n \left( \sum_{{\ind{i}_L}} \langle X_s(i_s) ^{\mathsf T} X_{L}({\ind{i}_L})^{\mathsf T} f,   v_{{\ind{i}_L},i_s}\rangle\right).
    \end{align*}

Now, since all the orthonormal sets have the same size, for any $m$ we can find an orthogonal matrix $X_s(m)$ such that $$X_s(m)(f^m_{\ind{i}_L})=f^0_{\ind{i}_L}\hspace{0.4 cm}\text{for every $\ind{i}_L$},$$ with  $X_s(0)=\Id$.

On the other hand, we can easily find orthogonal matrices $X_1(i_1),\ldots, X_{s-1}(i_{s-1})$, with $X_k(0)=\Id$ for $k=1,\ldots, s-1$ such that $$X_L ({i}_L)f^0_{{i}_L}=f_{0},$$ where $f_{0}=f^0_{(0,\dots,0)}$.  Indeed, in order to define these matrices, consider, for every $m=0,\ldots, n$ and $k=1,\ldots, s-1$, the orthonormal sets $$(f^0_{i_1,\ldots , i_{k-1}, m,0,\ldots, 0})_{i_1,\ldots, i_{k-1}=1}^n\hspace{0.3 cm} \text{and}\hspace{0.3 cm}(f^0_{i_1,\ldots , i_{k-1}, 0,0,\ldots, 0})_{i_1,\ldots, i_{k-1}=1}^n.$$ Since they have the same size, we can always find orthogonal matrices, $X_k(m)$ with $X_k(0)=\Id$ for every $k$ and satisfying $$X_k(m)(f^0_{i_1,\ldots , i_{k-1}, m,0,\ldots, 0})=f^0_{i_1,\ldots , i_{k-1}, 0,0,\ldots, 0}.$$ It is clear that these matrices satisfy the desired property.

    Then,

    $$\langle f_{0}, p(X) e\rangle =\sum_{i_s=0}^ n \left( \sum_{{\ind{i}_L}} \langle X_s(i_s) ^{\mathsf T} f^0_{{\ind{i}_L}},   v_{{\ind{i}_L},i_s}\rangle\right)=\sum_{i_s=0}^ n \left( \sum_{{\ind{i}_L}} \langle  f^{i_s}_{{\ind{i}_L}},   v_{{\ind{i}_L},i_s}\rangle\right).$$ 
    As a consequence,
 $$\norm{p}_{\cb}\geq \sum_{i_s=0}^n \sup _{\substack{\mathcal{F}_{i_s}=(f^{i_s}_{{\ind{i}_L}})_{{\ind{i}_L}}\\
  \mathcal{F}_{i_s}\text{ orthonormal}\\|\mathcal{F}_{i_s}|=(n+1)^{s-1}}} \sum_{{\ind{i}_L}} \langle f^{i_s}_{{\ind{i}_L}}, v_{{\ind{i}_L},i_s}\rangle,$$ which is the same as \eqref{eq:step1}.

 {\bf Step 3.} Third, we show that for any vectors $(w_{{\ind{i}_L}})_{{\ind{i}_L}}$

\begin{equation}
    \label{eq:step2}
 \sup _{\substack{\mathcal{F}=(f_{{\ind{i}_L}})_{{\ind{i}_L}}\\
\mathcal{F}\text{ orthonormal}\\|\mathcal{F}|=(n+1)^{s-1}}} \sum_{{\ind{i}_L}} \langle f_{{\ind{i}_L}}, w_{{\ind{i}_L}}\rangle \geq \left(\sum_{{\ind{i}_L}} \|w_{{\ind{i}_L}}\|^2\right)^\frac12.\end{equation}

To see this, for any given $\mathcal{F}$, let $M_{\mathcal{F}}$ be the linear operator on $\mathcal{H}$ defined by $M_{\mathcal{F}}(f_{{\ind{i}_L}})={w_{{\ind{i}_L}}}$ and $M_{\mathcal{F}}(v)=0$ for $v\in \langle \mathcal{F} \rangle ^\perp$. Then,
$$ \sum_{{\ind{i}_L}} \langle f_{{\ind{i}_L}}, w_{{\ind{i}_L}}\rangle =\Tr(M_{\mathcal{F}}).$$
We remark that for any given orthonormal set $\mathcal{F}_0$ we have
$$\{ M_{\mathcal{F}}: \mathcal{F} \mbox{ orthonormal}\}= \{ M_{\mathcal{F}_0} O: O \mbox{ orthogonal}\}.$$ Indeed, let $\mathcal{F}$ be another orthonormal set of $(n+1)^{s-1}$ elements. Then there is an orthogonal matrix $O$ such that $O \mathcal{F}= {\mathcal{F}_0}$, which implies that $M_{\mathcal{F}}=M_{\mathcal{F}_0} O.$ For the reverse inclusion, given $O$ orthogonal, define $\mathcal{F}=O^{-1}\mathcal{F}_0.$

As a consequence, $$ \sup _{\substack{\mathcal{F}=(f_{{\ind{i}_L}})_{{\ind{i}_L}}\\\mathcal{F}\text{ orthonormal}\\|\mathcal{F}|=(n+1)^{s-1}}} \sum_{{\ind{i}_L}} \langle f_{{\ind{i}_L}}, w_{{\ind{i}_L}}\rangle= \sup_{O \mbox{ orthogonal}} \Tr(M_{\mathcal{F}_0}O)=\|M_{\mathcal{F}_0}\|_{S_1}.$$ 
To obtain \eqref{eq:step2} we use the fact that

    \begin{equation}
    \label{eq:stepschatten}
        \|M_{\mathcal{F}_0}\|_{S_1} \geq \|M_{\mathcal{F}_0}\|_{S_2} =  \Tr( M_{\mathcal{F}_0}^{\mathsf T} M_{\mathcal{F}_0})^\frac12= \left(\sum_{{\ind{i}_L}} \|w_{{\ind{i}_L}}\|^2\right)^\frac12.
    \end{equation} Here, $\|M_{\mathcal{F}_0}\|_{S_1}$ stands for the trace norm of the matrix $M_{\mathcal{F}_0}$, which is known to be larger than or equal to its Hilbert-Schmidt norm.
    
Finally, the result follows by combining \eqref{eq:orthonormalityv}, \eqref{eq:step1} and \eqref{eq:step2}.

\end{proof}
\begin{remark}
   From equation \eqref{eq:stepschatten} we see that 
   inequality \eqref{eq:optimalcbBHBlockMult} can be improved to

   $$\|p\|_{\cb} \geq \sum_{i\in [n]_0} \|P^{s,i}\|_{S_1},$$
   for $s\in[t]$, where the operator $P^{s,i}$ is defined by
   $$P^{s,i}(f_{{\ind{i}_L}})=\sum_{{\ind{i}_R}} p_{({\ind{i}_L},i_s,{\ind{i}_R})} e_{{\ind{i}_R}} \mbox{ with } i_s=i,$$ and both $(f_{{\ind{i}_L}})_{{\ind{i}_L}}$ and $(e_{{\ind{i}_R}})_{{\ind{i}_R}}$ are orthonormal sets.

   Therefore, Theorem \ref{theo:optimalcbBHBlockMult} above improves the key technical Lemma 3.3 in \cite{arunachalam2025cb} in two different ways. On the one hand, it is stated for block-multilinear polynomials and not only for multilinear forms; and, on the other hand, it provides a stronger lower bound for the completely bounded norm of the polynomial.
\end{remark}

The first corollary of \cref{theo:optimalcbBHBlockMult} that we prove is \cref{theo:OptimalAAforBMCB}, which we restate for the reader's convenience. 

\theoOptimalAAforBMCB*

\begin{proof}
    By summing \cref{theo:optimalcbBHBlockMult} over all $s\in [t],$ it follows that 
    \begin{equation*}
        t\norm{a}_{\cb}\geq \sum_{s\in[t]}\sum_{i\in [n]_0}\sqrt{\sum_{\ind i\in [n]_0^t:i_s=i}a_{\ind i}^2}\geq \sum_{s\in[t]}\sum_{i\in [n]}\sqrt{\operatorname{Inf}_{x_s(i)}[a]},
    \end{equation*}where in the last inequality we have used Eq. (\ref{inf_block_mult}).

   For the optimality, we remark that $a(x_1,\dots,x_t)=x_1(1)\dots x_t(1)$ satisfies $\norm{a}_{\cb}=1$ and $\sum_{s\in[t]}\sum_{i\in [n]}\sqrt{\Inf_{x_{s}(i)}[a]}=t.$
\end{proof}
The second corollary of
\cref{theo:optimalcbBHBlockMult} is an optimal Bohnenblust-Hille inequality. 
\begin{corollary}
Let $p$ be a block multilinear polynomial of degree $t$. 
\label{coro:bhblockmultilinear}Then,
    \begin{equation}
        \label{eq:BHwithconstant1}
     \left(\sum_{\ind i \in [n]_0^t} |p_{\ind i}|^{\frac{2t}{t+1}}\right)^{\frac{t+1}{2t}} \leq \|p\|_{\cb}.
\end{equation}
The inequality is optimal, as witnessed by $p(x)=x.$
\end{corollary}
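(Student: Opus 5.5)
The plan is to deduce the inequality from \cref{theo:optimalcbBHBlockMult}, used for every $s\in[t]$, combined with a mixed-norm interpolation of Blei type. Write $A_s:=\sum_{i\in[n]_0}\big(\sum_{\ind i:\,i_s=i}|p_{\ind i}|^2\big)^{1/2}$, the $\ell_1$ norm in the $s$-th coordinate of the $\ell_2$ norm in the remaining coordinates. \cref{theo:optimalcbBHBlockMult} gives $A_s\le\norm{p}_{\cb}$ for each $s\in[t]$, so it suffices to show the Blei inequality
\begin{equation*}
\Big(\sum_{\ind i\in[n]_0^t}|p_{\ind i}|^{\frac{2t}{t+1}}\Big)^{\frac{t+1}{2t}}\le\prod_{s\in[t]}A_s^{1/t}.
\end{equation*}
This is a purely scalar statement about the array $(p_{\ind i})_{\ind i\in[n]_0^t}$ and is classical.

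I would prove it by Hölder's inequality, in the standard way. Set $q=\frac{2t}{t+1}$. Then $|p_{\ind i}|^{q}=\prod_{s}|p_{\ind i}|^{q/t}$, and one applies the multilinear Hölder inequality over the coordinates successively. Equivalently, one cites Blei's inequality in the form $\norm{p}_{\ell_{2t/(t+1)}}\le\prod_s\norm{p}_{\ell_1^{(s)}(\ell_2)}^{1/t}$, as in the standard proofs of the Bohnenblust--Hille inequality (e.g.\ Defant et al.). The exponent bookkeeping checks out: $\frac{t+1}{2t}=\frac1t\cdot 1+\frac{t-1}{t}\cdot\frac12$, which is exactly the interpolation between $\ell_1$ in one coordinate and $\ell_2$ in the other $t-1$ coordinates, averaged over the $t$ choices of the distinguished coordinate. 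Taking the geometric mean of the $t$ bounds $A_s\le\norm{p}_{\cb}$ then concludes the proof.

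The only real obstacle is stating the Blei inequality with constant exactly $1$, since constants matter for the optimality claim. The standard Hölder-induction proof does give constant $1$, so I would either cite it or include the short induction on $t$. That induction applies Hölder in the last coordinate with exponents chosen so that one factor yields the $\ell_1(\ell_2)$ norm in coordinate $t$ and the other yields the $(t-1)$-dimensional mixed norms, to which Minkowski's inequality is applied before the inductive hypothesis.

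For optimality, take $p(x)=x$ with $t=1$. Then both sides equal $1$, since $\norm{x}_{\cb}=1$ and the left side is $(|1|^{1})^{1}=1$. More generally, any monomial $x_1(1)\cdots x_t(1)$ attains equality.
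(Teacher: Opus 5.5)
Your proposal is correct and follows the paper's proof: apply \cref{theo:optimalcbBHBlockMult} for every $s\in[t]$, then combine the $t$ resulting bounds through Blei's mixed-norm inequality with constant $1$, which the paper simply cites from Blei's Lemma~5.3 rather than re-deriving by the H\"older--Minkowski induction you sketch. Your optimality check, with $p(x)=x$ and more generally $x_1(1)\cdots x_t(1)$, is also fine.
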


\begin{proof}
The result follows from \cref{theo:optimalcbBHBlockMult} and Blei's inequality~\cite[Lemma 5.3]{blei1979fractional}, that reads as 

    \begin{equation*}
        \label{eq:bleiinequality}
  \left(\sum_{\ind i \in [n]_0^t} |p_{\ind i}|^{\frac{2t}{t+1}}\right)^{\frac{t+1}{2t}} \leq \left(\prod_{s\in [t]} \sum_{i\in [n]_0} \left(\sum_{ \substack{\ind i \in [n]_0^t\\i_s=i}} |p_{\ind i}|^2 \right)^{\frac12}\right)^{\frac{1}{t}}.
    \end{equation*} 
\end{proof}

\subsection{Simulation results}\label{sec:simulationresults}

In this section, we state three simulation results for polynomials satisfying certain functional-analytic assumptions, presented from weaker to stronger, and we focus on how stronger assumptions improve the efficiency of the simulation. The first result is due to Aaronson and Ambainis~\cite{Aaronsons:2014}; the second and third are new. We first state these results, and later, in \cref{rem:comparison}, discuss their implications for quantum algorithms such as those in \cref{fig:QQAblocks}. In these results, the restriction of a polynomial $p:\{-1,1\}^n\to \R$ to $x\in\{-1,1\}^S$ for some $S\subseteq[n]$, is the polynomial $p|_{x^S}:~\{-1,1\}^{[n]-S}\to \R$ defined by restricting $p$ to $x^S$. We say that a family of polynomials $\mathcal P$ is closed under restrictions if for every $p\in\mathcal P$ every restriction of $p$ belongs to $\mathcal P.$

\begin{theorem}{\normalfont (\cite[Theorem 22]{Aaronsons:2014})}\label{theo:simAA}
    Let $\mathcal P \subseteq\{p:\{-1,1\}^n\to [-1,1],\ n\in \N\}$  be a family of polynomials closed under restriction. Assume that there exists $w:\N\times [0,\infty)\to \R$ that is non-increasing in the first argument and non-decreasing in the second, and satisfies $$\maxinf[p]\geq w(\deg(p),\var [p])$$ for every $p\in \mathcal P.$ Then, given $p\in\mathcal P$, there is a deterministic classical algorithm that makes, $$\frac{4\deg(p)}{\delta\cdot w(\deg(p),\eps^2\delta/2))}$$ queries and approximates $p$ up to error  $\eps$ on a $(1-\delta)$-fraction of the inputs.
\end{theorem}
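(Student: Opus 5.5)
We follow the Aaronson--Ambainis strategy of repeatedly querying the most influential variable. Set $d=\deg(p)$ and $\eta:=\eps^2\delta/2$. The algorithm queries a variable $i$ maximizing $\Inf_i[p]$ and restricts $p$ to the observed value. It repeats this on the restricted polynomial until the current restriction $q$ satisfies $\var[q]\le \eta$, and then outputs $\E[q]$. Since $\mathcal P$ is closed under restriction, every intermediate polynomial lies in $\mathcal P$. Restrictions also do not increase the degree, so by monotonicity of $w$ in its first argument each such $q$ with $\var[q]>\eta$ satisfies $\maxinf[q]\ge w(\deg q,\var q)\ge w(d,\eta)$, using that $w$ is non-decreasing in its second argument. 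Because the query sequence depends on the answers, the algorithm is adaptive and deterministic. To make the query count independent of $y$, we stop after at most $N$ queries in any case, with $N$ to be fixed below.

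The core of the analysis is a potential argument. Consider the random process in which $y$ is uniform, and let $q_k$ be the restriction after $k$ queries. We use the potential $\Phi_k:=\E_y[\Inf[q_k]]$, the expected total influence of the current restriction. Two facts drive the argument. First, $\Phi_0=\Inf[p]\le d\,\var[p]\le d$ because $\|p\|_\infty\le 1$. Second, querying variable $i$ of $q$ and averaging over its value removes all Fourier mass on sets containing $i$, and each such set contributes at least $1$ to the total influence. Hence $\E_{y_i}[\Inf[q|_{y_i}]]\le \Inf[q]-\Inf_i[q]$. At every step where the process has not yet stopped we therefore lose at least $w(d,\eta)$ in expectation, and this gives $\sum_k \Pr[\text{still running at step }k]\cdot w(d,\eta)\le d$.

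To finish, Markov's inequality bounds by $\delta/2$ the probability that we are still running after $N=2d/(\delta w(d,\eta))$ steps; adjusting constants gives the stated $4d/(\delta w(d,\eps^2\delta/2))$. On the runs that did stop, $\var[q]\le\eta$. Chebyshev applied to the remaining variables then shows $|q(y)-\E q|>\eps$ with conditional probability at most $\eta/\eps^2=\delta/2$. A union bound yields error at most $\eps$ on a $(1-\delta)$-fraction of inputs.

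The main obstacle is making the potential step rigorous. The stopping time is adaptive, so we must check that the influence decrement holds in conditional expectation given the history. A second subtlety is that the truncated run does not exceed $N$ queries on every $y$. The fix is to note that the output rule is fixed, so the failure events can be bounded separately as above; if the constants do not close, we would instead use variance as the potential with the same bookkeeping.
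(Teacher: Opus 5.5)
Your proof is correct and is essentially the Aaronson--Ambainis argument that the paper cites and adapts in its proof of \cref{theo:simInf}. In both, you query the most influential variable greedily and use total influence as the potential: it is at most $\deg(p)$, and each query lowers it in expectation by exactly $\maxinf$ of the current restriction, since the Fourier mass on each $S\ni i$ moves to $S\setminus\{i\}$. Markov and Chebyshev then finish the argument. The one difference is your variance-based stopping rule in place of a fixed number of queries, which cleanly handles the fact that the variance is not monotone along restrictions and even gives $2\deg(p)/(\delta\, w(\deg(p),\eps^2\delta/2))$ queries. The fallback to a variance potential you mention is not needed, and it would not work anyway, because the expected variance drops by only $\widehat q(\{i\})^2$ per query, which the hypothesis does not control.
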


The second simulation result applies when an inequality like the one in \cref{eq:InfAA} holds, and its proof is a modification of that of \cref{theo:simAA}, which corresponds to \cref{eq:varAA}.

\begin{theorem}\label{theo:simInf}
    Let $\mathcal P\subset \{p:\{-1,1\}^n\to [-1,1], n \in \mathbb{N}\}$ be a family of polynomials closed under restriction. Assume that there exists a non-decreasing function $w:\N\to \R$ such that $$ \maxinf [p]\geq \frac{\Inf^2[p]}{ w(\deg(p))},$$ for every $p\in \mathcal P.$ Then, given $p\in\mathcal P$, there is a deterministic classical algorithm that makes $$\frac{4 w(\deg(p))}{\delta^2\eps^2}$$ queries and approximates $p$ up to error  $\eps$ on a $(1-\delta)$-fraction of the inputs.    
\end{theorem}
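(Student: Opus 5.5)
The plan is to adapt the Aaronson--Ambainis greedy argument of \cref{theo:simAA}, but to track the \emph{expected total influence} of the restricted polynomial instead of its variance. The algorithm is as follows. Given $p\in\mathcal P$ and an unknown $y$, repeatedly query a variable of maximal influence of the current restriction $p_k=p|_{y^{S_k}}$, for $k=0,1,\dots,K-1$ with $K=4w(\deg p)/(\delta^2\eps^2)$. At the end, output $\E[p_K]$, the constant Fourier coefficient of the final restriction. Since $\mathcal P$ is closed under restriction and restriction does not increase degree, each $p_k$ belongs to $\mathcal P$ and satisfies $\maxinf[p_k]\geq \Inf^2[p_k]/w(\deg p)$, using monotonicity of $w$.

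The key potential function is $\Phi_k:=\E_y[\var[p_k]]$, averaged over the queried values. By the standard martingale identity, querying variable $i$ of a polynomial $q$ and averaging over its value gives $\E_{y_i}[\var[q|_{y_i}]]=\var[q]-\widehat q(\{i\})^2$. This holds because the expected squared mean of the restriction is $\widehat q(\emptyset)^2+\widehat q(\{i\})^2$, while the second moment is preserved. This identity only removes the degree-one coefficient, however, and we need to remove $\Inf_i$. I would therefore instead use $\Psi_k:=\E_y[\Inf[p_k]]$, which decreases in expectation by \emph{at least} $\Inf_i[q]$. Indeed, each monomial $S\ni i$ becomes $S\setminus\{i\}$, so its contribution $|S|\widehat q(S)^2$ drops to $(|S|-1)\widehat q(S)^2$, and all cross terms vanish on averaging over $y_i$. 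Hence $\Psi_{k+1}\leq \Psi_k-\E_y[\maxinf[p_k]]\leq \Psi_k-\E_y[\Inf^2[p_k]]/w\leq \Psi_k-\Psi_k^2/w$, where the last step is Jensen's inequality.

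Since $\Psi_0=\Inf[p]$ is finite, the recursion $\Psi_{k+1}\leq\Psi_k-\Psi_k^2/w$ gives $1/\Psi_{k+1}\geq 1/\Psi_k+1/w$, so $\Psi_K\leq w/K$. This rate does not depend on $\Psi_0$, which is what makes the query count free of $n$; the bound $\Inf[p]\leq \deg(p)$ for bounded $p$ is therefore not even needed. By Poincaré, $\E_y[\var[p_K]]\leq \Psi_K\leq w/K=\delta^2\eps^2/4$. Markov's inequality then gives $\var[p_K]\leq \delta\eps^2/4$ except on a $\delta/2$ fraction of $y$. For such $y$, a second application of Chebyshev/Markov over the unqueried coordinates shows $|p(y)-\E[p_K]|\leq\eps$ except on a further $\delta/2$ fraction. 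Combining the two gives a $(1-\delta)$ success fraction.

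The main obstacle is the probability bookkeeping in this last step. The per-$y$ event requires that $|p_K(y_{\text{rest}})-\E p_K|\leq\eps$, where the rest of $y$ is uniform and independent of the queried prefix, so Markov must be applied to the joint average $\E_y[(p(y)-\E p_K)^2]=\Phi_K\leq\Psi_K$ directly. That single application gives failure fraction $\Psi_K/\eps^2\leq\delta^2/4\leq\delta$. This is simpler than the two-stage split and explains the slack in the constant $4/\delta^2$. Determinism holds because the query choices are fixed functions of the answers so far.
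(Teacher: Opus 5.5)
Your proof is correct and is essentially the paper's argument: greedy max-influence queries, the exact identity that averaging over the queried bit lowers the total influence by $\Inf_i$ of that bit, Jensen's inequality, the reciprocal recursion $1/\Psi_{k+1}\geq 1/\Psi_k+1/w$, and Poincaré. The only deviation is the endgame, where you apply Chebyshev once to the joint average $\E_y[(p(y)-\E p_K)^2]=\E_y[\var[p_K]]$ instead of the paper's Markov-then-Chebyshev split; this is valid and in fact shows that $\lceil w(\deg p)/(\delta\eps^2)\rceil$ queries already suffice. Your intermediate two-stage paragraph does miscount, since Markov at threshold $\delta\eps^2/4$ only gives a failure fraction of $\delta$, not $\delta/2$, but the single-step version you settle on makes this moot.
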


\begin{proof} We give a classical algorithm
$C$ that makes $T=(4/\delta^2\eps^2)\cdot w(\deg(p))$ queries and approximates $p$, up to error $\eps$, on a $(1-\delta)$-fraction of $\{-1,1\}^{n}$. The algorithm on input $x$ behaves as follows. 

\bigskip

\texttt{\qquad set }$p_{0}:=p$

\texttt{\qquad for }$j\in [T-1]_0,$\texttt{:}

\texttt{\qquad\qquad find an }$i\in\left[  {n}-j\right]  $\texttt{\ such
that }$\Inf_i[p_j]=\maxinf[p_j]$

\texttt{\qquad\qquad query }$x_{i}$\texttt{, and let }$p_{j+1}
:\{-1,1\}^{{n}-j}\rightarrow\mathbb{R}$\texttt{\ be the restriction of } $p$\texttt{ to }$x_i$

\qquad \texttt{output }$\widetilde p(x)=\mathbb E[p_{T}]$
\bigskip

Note that $p_T$ is a polynomial that depends on the input $x$, which we assume to be a uniformly random element of $\{-1,1\}^{n}$. It suffices to show that 
\begin{equation}\label{eq:goal}
    \var [p_T]\leq \frac{\delta\eps^2}{2}\quad \text{with probability } 1-\delta/2 \text{ for a uniformly random } x,
\end{equation}
 since, by Chebychev's inequality,  it implies 
\begin{equation*}
    |\widetilde p_T(x)-p(x)|\leq \eps
\end{equation*}
with probability $\geq 1-\delta/2.$ 

Thus, we focus on showing \cref{eq:goal}. We will prove a stronger inequality, namely 
\begin{equation}\label{eq:goal2}
    \Inf [p_T]\leq \frac{\delta\eps^2}{2}\quad \text{with probability } 1-\delta/2 \text{ for a uniformly random } x.
\end{equation}
To this end, we use the notation $p_j(x):\{-1,1\}^{n-j}\to \R$ to refer to the polynomial obtained after restricting $p$ a total of $j$ times when the input is $x$. We also use $i_j(x)$ to refer to the entry that is queried the $j$-th time when the input is $x$. We note that for any $x\in \{-1,1\}^n$ we have

$$\frac12(\Inf [p_j(x)]+\Inf [p_j(x^{\oplus i_{j}(x)})])=\Inf [p_{j-1} (x)]-\Inf_{i_{j}(x)}[p_{j-1}(x)]. $$ 
Now, by the choice of $I_j$ and the hypothesis of the theorem, we have that 
\begin{equation*}
    \frac12(\Inf [p_j(x)]+\Inf [p_j(x^{\oplus i_{j}(x)})])=\Inf[p_{j-1}(x)]-\maxinf[p_j(x)]\leq \Inf[p_{j-1}(x)]-\frac{\Inf^2[p_{j-1}(x)]}{w(\deg(p))}.
\end{equation*}
Taking expectations over the input $x\in \{-1,1\}^n$ on both sides of the above equation, one has that 
\begin{equation*}
    \mathbb E_{x}[\Inf [p_j(x)]]=\E_x\Inf[p_{j-1}(x)]-\frac{\E_x\Inf^2[p_{j-1}(x)]}{w(\deg(p))}.
\end{equation*}
Next, by Jensen's inequality we have that $\E_x[\Inf^2[p_{j-1}(x)]]\geq \E_x[\Inf[p_{j-1}(x)]]^2,$
so 
\begin{equation*}
    \mathbb E_{x}[\Inf [p_j(x)]]\leq \E_x[\Inf[p_{j-1}(x)]]\left(1-\frac{\E_x[\Inf[p_{j-1}(x)]]}{w(\deg(p))}\right).
\end{equation*}
Now, we use \cref{claim:auxi} to ensure that 
\begin{equation}\label{eq:intermediate}
    \mathbb E_{x}[\Inf [p_T(x)]]\leq \frac{\eps^2\delta^2}{4}.
\end{equation}
\begin{claim}\label{claim:auxi}
    Let $K\in (0,\infty)$. Let $(a_j)_{j\in \N}$ be a sequence of nonnegative numbers such that $a_{j+1}\leq a_j(1-a_j/K).$ Then, 
    $$ a_T\leq b,$$
    for $T=\lceil K/b+1\rceil.$
\end{claim}
\begin{claimproof}
    We have that 
    \begin{equation*}
        \frac{1}{a_{j+1}}\geq \frac{1}{a_j(1-a_j/K)}=\frac{1}{a_j}+\frac{1}{K-a_j}\geq \frac{1}{a_j}+\frac{1}{K}.
    \end{equation*}
    Then, 
    \begin{equation*}
        \frac{1}{a_T}\geq \frac{1}{a_1}+\frac{T-1}{K}\geq \frac{T-1}{K}\geq \frac{1}{b},
    \end{equation*}
    as claimed.
 
\end{claimproof}
Finally, we combine \cref{eq:intermediate} with Markov's inequality and arrive at \cref{eq:goal2}.

\end{proof}

The third simulation result is \cref{theo:simfromsqrtinf}. We restate it below for the reader's convenience.

\theosimfromsqrtinf*

\begin{proof}
We define 
\begin{equation*}
    J:=\left\{j\in[n]:\,  \sqrt{\Inf_j[p]}\geq \frac{\eps^2\delta}{\sum_{i\in [n]}\sqrt{\Inf_i[p]}}\right\}. 
\end{equation*}
Then, we define $q:\{-1,1\}^n\to\R$ as 
\begin{equation*}
    q(x_J,x_{[n]-J}):=\mathbb E_{y\in\{-1,1\}^{[n]-J}}[p(x_J,y)].
\end{equation*}
Note that by definition $q$ depends only on the variables of $J$, which satisfy $$ \sum_{i\in [n]}\sqrt{\Inf_i[p]}\geq |J|\cdot \frac{\eps^2\delta}{\sum_{i\in [n]}\sqrt{\Inf_i[p]}},$$
so $|J|\leq (\sum_{i\in [n]}\sqrt{\Inf_i[p]})^2/(\eps^2\delta)$. 

We define \begin{equation}\label{eq:spectrumq}
    q=\sum_{S\subset J}\widehat p(S)\chi_S.
\end{equation}
Now, we can see that $q$ and $p$ are close in $L_2$ distance: 
\begin{align*}
    \mathbb E_x[|q(x)-p(x)|^2]=\sum_{S\not\subseteq J}|\widehat p(S)|^2\leq \sum_{i\not\in J}\Inf_i[p]\leq \frac{\eps^2\delta}{\sum_{j\in [n]}\sqrt{\Inf_j[p]}}\sum_{i\not\in J}\sqrt{\Inf_i[p]}\leq \eps^2\delta,
\end{align*}
where in the first step we have used \cref{eq:spectrumq} and Parseval identity. Finally, since 
\begin{equation*}
    \Pr_x[|q(x)-p(x)|\geq \eps]\cdot \eps^2\leq \E_x[|q(x)-p(x)|^2],
\end{equation*}$q$ defines a classical non-adaptive deterministic algorithm that makes $(\sum_{i\in [n]}\sqrt{\Inf_i[p]})^2/(\eps^2\delta)$ queries and satisfies $$ \Pr_x[|q(x)-p(x)|\geq \eps]\leq \delta.$$
\end{proof}

Now, we are ready to prove \cref{cor:blockQQAsim}. 

\begin{proof}[ of \cref{cor:blockQQAsim}]  According to \cref{theo:OptimalAAforBMCB}, if $a:\{-1,1\}^n\times\dots\times\{-1,1\}^n\to \R$ is the amplitude of a $t$-query quantum algorithm that queries $t$-disjoint blocks of inputs, as in \cref{fig:QQAblocks}, we have 
    \begin{equation*}
     1\geq \norm{a}_{\cb}\geq \sum_{s\in [t]}\sum_{i\in [n]}\frac{\sqrt{\Inf_{x_s(i)}[a]}}{t}.
\end{equation*}

Hence, \cref{theo:simfromsqrtinf} implies the existence of a deterministic classical algorithm that $\eps$-approximates the amplitude on at least a ($1-\delta$)-fraction of the inputs by making at most $t^2/\eps^2\delta$ non-adaptive queries.
\end{proof}

\begin{remark}\label{rem:comparison}
    We now compare the three simulation results for quantum query algorithms as in \cref{fig:QQAblocks}, implied by \cref{theo:simAA,theo:simInf,theo:simfromsqrtinf} and the correspondent state-of-the-art functional inequalities. The Aaronson--Ambainis argument, \cref{theo:simAA}, and the inequality of \cref{eq:varAA} proven in~\cite{gutierrez2023influences} yield a simulation with 
    \begin{equation*}
        \frac{16t^3}{\eps^4\delta^3}
    \end{equation*}
    adaptive classical queries. The inequality of~\cref{eq:InfAA} and \cref{theo:simInf} yield a simulation with 
    \begin{equation*}
        \frac{4t^2}{\eps^2\delta^2}
    \end{equation*}
    adaptive queries, saving a factor $t/\eps^2\delta.$ Finally, \cref{theo:OptimalAAforBMCB,theo:simfromsqrtinf} imply simulation with 
    \begin{equation*}
        \frac{t^2}{\eps^2\delta}
    \end{equation*}
    non-adaptive queries, saving a factor $1/\delta$ and gaining nonadaptivity. 
\end{remark}

\section{Sharp Fourier Growth for high-degree levels of quantum algorithms}
In this section, we show optimal Fourier growth bounds for the $2t$-th level of quantum query algorithms that make $t$ queries. We first treat the standard case, where the queries are made sequentially, as in \cref{fig:QQA}, and then the more general case of bounded adaptivity  (the sequential case is the case with at $t-1$ rounds of adaptivity), as in \cref{fig:QQAboundedadativity}. As the proof of the sequential case is considerably cleaner than the one of the more general case with bounded adaptivity, we have decided to include both in this manuscript, despite that one is subsumed by the other. 
\subsection{The standard case}
In this section we prove \cref{theo:optFGstd}, which we restate for the reader's convenience. Since all polynomials defined on the Boolean cube have degree at most $n$, we note that there are no monomials of degree $2t$ for $2t>n$. Hence, we assume that $2t\leq n$ to prove bounds on the Fourier growth.
\theooptFGstd*
Before proving this theorem, we will prove a very useful lemma that relates the Fourier coefficients of the polynomial $p$ to those of the correspondent multilinear form $T$ defined in \cref{theo:cbmethodstd}.
\begin{lemma}\label{lemma:2}
 Let $T:\mathbb R^{2n}\times\dots\times\R^{2n}\to \R$ be a $2t$-linear form and $p:\{-1,1\}^n\to \R$ defined by $p(x)=T((x,1^n), \dots,(x,1^n))$  for every $x\in\{-1,1\}^n$. Then, for every $S\subseteq [n]$, we have that 
    \begin{equation*}
        \widehat p(S)= \sum_{\ind i:S_{\ind i}=S}T_{\ind i},
    \end{equation*}where for every $\ind i\in [2n]^{2t}$, $T_{\ind i}$ denotes the corresponding coefficient of the $2t$-linear form $T$ and we define  $$S_{\ind i}=\{i\in [n]:\text{ $i$ appears an odd number of times in }\ind i\}.$$
    \end{lemma}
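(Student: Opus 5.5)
The plan is to expand $p$ directly from its definition and read off the Fourier coefficients. Substituting $y=(x,1^n)\in\R^{2n}$ into each of the $2t$ slots gives
\begin{equation*}
p(x)=\sum_{\ind i\in[2n]^{2t}} T_{\ind i}\, y(i_1)\cdots y(i_{2t}).
\end{equation*}
Here $y(j)=x_j$ for $j\in[n]$ and $y(j)=1$ for $j\in\{n+1,\dots,2n\}$. So each monomial $y(i_1)\cdots y(i_{2t})$ equals $\prod_{k:\, i_k\le n} x_{i_k}$, the product over only those entries of $\ind i$ that lie in $[n]$.

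Next I would reduce each such monomial on the Boolean cube. Since $x_j^2=1$ for $x\in\{-1,1\}^n$, the product $\prod_{k: i_k\le n}x_{i_k}$ equals $\prod_{j\in[n]} x_j^{c_j(\ind i)}$. Here $c_j(\ind i)$ is the number of occurrences of $j$ in $\ind i$. Reducing exponents mod $2$ turns this into $\chi_{S_{\ind i}}(x)$, with $S_{\ind i}$ exactly the set of $j\in[n]$ occurring an odd number of times. Indices in $\{n+1,\dots,2n\}$ contribute the factor $1$, so they never enter $S_{\ind i}$. This is consistent with the definition, which only collects elements of $[n]$.

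Therefore $p=\sum_{\ind i} T_{\ind i}\chi_{S_{\ind i}}$ as functions on the cube. Grouping by the value of $S_{\ind i}$ gives
\begin{equation*}
p=\sum_{S\subseteq[n]}\Big(\sum_{\ind i:\,S_{\ind i}=S}T_{\ind i}\Big)\chi_S.
\end{equation*}
The characters form an orthonormal basis, so the Fourier expansion is unique. Equivalently, applying $\widehat p(S)=\E_x[p(x)\chi_S(x)]$ and using $\E_x[\chi_{S'}\chi_S]=\delta_{S,S'}$ term by term yields the claimed identity.

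There is no real obstacle here. The only point needing care is the bookkeeping that the padding coordinates $n+1,\dots,2n$ are constant $1$ and that $S_{\ind i}$ is taken inside $[n]$.
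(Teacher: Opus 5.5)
Your proposal is correct and follows essentially the same route as the paper: the paper reduces by linearity to a single monomial form $x^1(k_1)\cdots x^{2t}(k_{2t})$ and observes that its restriction to $\{-1,1\}^n\times\{1\}^n$ is $\chi_{S_{\ind k}}$. That is exactly your term-by-term reduction followed by grouping and uniqueness of the Fourier expansion. Your handling of the padding coordinates $n+1,\dots,2n$ is also right.
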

    \begin{proof}
        By linearity, it suffices to prove the particular case where $T$ is a monomial form $$x^1(k_1)\dots x^{2t}(k_{2t}),$$ for some $\ind k\in [2n]^t$. When this form is restricted to $\{-1,1\}^n\times \{1\}^n$ it turns into 
        \begin{equation*}
            p=\chi_{S_{\ind k}},
        \end{equation*}
        so the claim follows.
    \end{proof}

    We are now ready to prove \cref{theo:optFGstd}.

\begin{proof}
    Let $T:\mathbb R^{2n}\times \dots\times \mathbb R^{2n}\to \R$ be the completely bounded $2t$-form of \cref{theo:cbmethodstd} that satisfies $p(x)=T((x,1^n),\dots,(x,1^n))$ for every $x\in\{-1,1\}^n$. 

    We will define a tuple of matrices with operator norm upper bounded by one and evaluate $T$ on them in a way that, combining \cref{theo:cbmethodstd,lemma:2}, we can obtain the desired bound. 
     
     Let $\alpha_S=\sign(\widehat p (S))$ ($\alpha_S=1$ if  $\widehat p (S)=0$) and consider the Hilbert space with orthonormal basis $$\{v\}\cup \{e_S:\, S\subseteq [n], |S|\leq 2t-1\}.$$ 
     
     We define $2n$ matrices acting on that space as
    $$X(i)v=\sum_{i \in S, |S|={2t}} \frac{\alpha_S}{\sqrt{{n-1\choose 2t-1}}} e_{S\setminus i} \mbox{  and  }  X(i)e_S=\delta_{i\in S} e_{S\setminus i},$$ for $1\leq i\leq n$, and $X(i)=\Id$ for $i>n.$ Since it is trivial that $\|X(i)\|=1$ for $i>n$, we only need to study the norm of $X(i)$ for $1\leq i\leq n$. To this end, let $z=a_vv+\sum_{S\subseteq [n]: |S|\leq 2t-1}a_Se_S$ and note that 
    \begin{align*}
X(i)z&=a_v\sum_{S:\, i \in S, |S|={2t}} \frac{\alpha_S}{\sqrt{{n-1\choose 2t-1}}} e_{S\setminus i}+\sum_{S:\, |S|\leq {2t-1}}a_S\delta_{i\in S} e_{S\setminus i}\\&=a_v\sum_{S:\, i \in S, |S|={2t}} \frac{\alpha_S}{\sqrt{{n-1\choose 2t-1}}} e_{S\setminus i}+\sum_{S:\, i \in S, |S|\leq {2t-1}}a_S e_{S\setminus i}.
    \end{align*} Hence,
    \begin{align*}
\|X(i)z\|^2&=\sum_{S:\, i \in S, |S|={2t}} |a_v|^2\frac{1}{{n-1\choose 2t-1}} +\sum_{S:\, i \in S, |S|\leq {2t-1}}|a_S|^2\\&\leq |a_v|^2+\sum_{S:\, |S|\leq {2t-1}}|a_S|^2 =\|z\|^2,
    \end{align*}from where we immediately deduce that $\|X(i)\|\leq 1$.

   In addition, it is easy to check that
    \begin{align}
        \langle e_\emptyset, X(i_1)\dots X(i_{2t}) v\rangle&=\left\{\begin{array}{ll}
           \frac{\alpha_{S_{\ind i}}}{\sqrt{{n-1\choose 2t-1}}}  &  \text{if } |S_\ind i|=2t,\\
           0  & \text{otherwise};
        \end{array}\right.\label{eq:keypropX}
    \end{align}
    note that the condition $|S_\ind i|=2t$ means that the $2t$ elements $i_1,\ldots, i_{2t}\in [n]$ are all different.
    
    Now, by \cref{theo:cbmethodstd} we have that 
    \begin{align*}
        1\geq \sum_{\ind i\in [2n]^{2t}}T_{\ind i}\langle e_{\emptyset},X(i_1)\dots X(i_{2t})v\rangle=\sum_{\ind i:\ |S_{\ind i}|=2t}T_{\ind i} \frac{\alpha_{S_{\ind i}}}{\sqrt{{n-1\choose 2t-1}}}=\frac{1}{\sqrt{{n-1\choose 2t-1}}}\sum_{|S|=2t}\alpha_S\sum_{\ind i:\, S_\ind i=S}T_{\ind i}.
    \end{align*}
    Finally, by \cref{lemma:2} we have that 
    \begin{equation*}
        1\geq \frac{1}{\sqrt{{n-1\choose 2t-1}}}\sum_{|S|=2t}\alpha_S\widehat p(S)=\frac{1}{\sqrt{{n-1\choose 2t-1}}}\norm{\widehat p_{2t}}_{\ell_1}.
    \end{equation*}
    Hence, $$\norm{\widehat p_{2t}}_{\ell_1}\leq \sqrt{{n-1\choose 2t-1}}.$$
\end{proof}
\begin{remark}\label{rem:wherebreaks}
With the proof technique of \cref{theo:optFGstd}, we can also show that the acceptance probability $p$ of $t$-query quantum algorithms satisfies
\begin{equation*}
        \norm{\widehat p_{2t-1}}_{\ell_1}\leq \sqrt{{n-1\choose 2t-2}}.
\end{equation*}
However, we could not adapt our technique to show tight upper bounds to $\norm{\widehat p_{s}}_{\ell_1}$ for $s\in [2t-2]$. The reason why we can only prove upper bounds for the degrees $2t-1$ and $2t$ is that for monomials $\chi_S$ with $|S|=2t-1$ or $|S|=2t$, the set $\{\ind i\in [2n]^{2t}:\, S_\ind i=S\}$ contains only multi-indices $\ind i$ with no repeated elements. Without that, our proof technique breaks. Indeed, the matrices $X(i)$ that we define in the proof of \cref{theo:optFGstd} can be defined for any $s\in [2t]$, but they only satisfy \cref{eq:keypropX} for those $\ind i$ with no repeated elements; if $\ind i$ contains a repeated element, then the LHS of \cref{eq:keypropX} always evaluates to $0$. 
\end{remark}
\subsection{The case of quantum algorithms with bounded adaptivity}

\begin{theorem}\label{prop:fgrowthadaptivequeriesfgrowth}
    Let $p:\{-1,1\}^n\to \mathbb{R}$ be the acceptance probability of an adaptive quantum query algorithm with $r-1$ rounds of adaptivity, performing $t_1,\dots,t_r$ queries in each round, as in \cref{fig:QQAboundedadativity}. Let $t_{\max}=\max t_s$ and $t=t_1+ \dots+t_r$. Then, 
   \begin{equation*}
        \norm{\widehat p_{2t}}_{\ell_1}\leq \left(\frac{2en}{2t- t_{\max}}\right)^{t-t_{\max}/2}.
\end{equation*}
\end{theorem}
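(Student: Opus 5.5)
The plan is to mimic the proof of \cref{theo:optFGstd}, now starting from the $2r$-linear form $T$ of \cref{theo:cbmethodboundedadapt}. Its $s$-th argument lives on $\R^{(2n)^{t_{s}}}$, with the rounds read in the symmetric order $t_1,\dots,t_r,t_r,\dots,t_1$, and satisfies $\norm{T}_{\cb}\le 1$. First I would prove the analogue of \cref{lemma:2}. Index each coordinate of the $s$-th argument by a tuple $\ind k^{(s)}\in[2n]^{t_s}$ (the $s$-th slot having $t_s$ or $t_{2r+1-s}$ coordinates, as appropriate). For a full index $\ind K=(\ind k^{(1)},\dots,\ind k^{(2r)})$, set $S_{\ind K}$ to be the set of $i\in[n]$ appearing an odd number of times in the concatenation. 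Restricting to $(x,1^n)^{\otimes t_s}$ turns each monomial into $\chi_{S_{\ind K}}$, so again $\widehat p(S)=\sum_{\ind K:S_{\ind K}=S}T_{\ind K}$. For $|S|=2t$, every $\ind K$ with $S_{\ind K}=S$ consists of $2t$ distinct elements of $[n]$.

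Next I would build operators $X_s(\ind k)$ of norm at most $1$, one for each coordinate $\ind k\in[2n]^{t_s}$ of slot $s$, together with vectors $v,e_\emptyset$. The target is
\[
\langle e_\emptyset,X_1(\ind k^{(1)})\cdots X_{2r}(\ind k^{(2r)})v\rangle=\frac{\alpha_{S_{\ind K}}}{N}\,\delta_{|S_{\ind K}|=2t},
\]
with $N$ as small as possible. As before, I would use basis vectors $e_A$ indexed by sets $A\subseteq[n]$ with $|A|<2t$, plus $v$. Then $X_s(\ind k)e_A=e_{A\setminus\{k_1,\dots,k_{t_s}\}}$ whenever the $k_j$ are distinct elements of $[n]$ contained in $A$, and $0$ otherwise. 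The vector $X_s(\ind k)v$ is a normalised superposition $\sum_{S}\alpha_S e_{S\setminus\{\ind k\}}/N$ over $|S|=2t$ with $S\supseteq\{\ind k\}$. The ordering of the $\ind k$ within a block must be fixed, for instance by restricting to increasing tuples, so that each $S$ is counted once. Alternatively, the ordering can be absorbed into the sign and counting. Coordinates containing indices larger than $n$ (the $1^n$ part) must give $0$ on the top level; I would simply map them to $0$, which is harmless because top-degree monomials never use them.

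The main obstacle is the norm of $X_s(\ind k)$ on $v$ and the choice of where to place the $v$-injection. With $v$ injected in the last slot, which has $t_1$ coordinates, the column $X(\ind k)v$ has $\binom{n-t_1}{2t-t_1}$ nonzero entries. This forces $N=\sqrt{\binom{n-t_1}{2t-t_1}}$, and $\binom{n}{2t-t_1}\le(en/(2t-t_1))^{2t-t_1}$. To get the exponent $t-t_{\max}/2$, I would instead inject $v$ at the slot of the largest round $t_{\max}$. This can be done by splitting the product as $\langle X_{\le s}^{\mathsf T}e_\emptyset, X_{>s}v\rangle$ and building the injection symmetrically in the middle. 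Orthogonality of the images $e_{S\setminus\{\ind k\}}$ for distinct $S$ keeps the other operators partial isometries, and the norm computation is the same column-sum estimate as in \cref{theo:optFGstd}. Since the round sizes appear twice, a careful treatment would split $S$ as $L\cup M\cup R$ around the injection point. It would use vectors $\sum \alpha_S e_L\otimes e_R$ with a Cauchy–Schwarz bound on the number of completions, losing the factor $2^{\,t-t_{\max}/2}$ that appears in the statement. The bound then gives $N^2\le\binom{n}{2t-t_{\max}}$ up to $2^{2t-t_{\max}}$, hence $N\le(2en/(2t-t_{\max}))^{t-t_{\max}/2}$.

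Finally, exactly as in \cref{theo:optFGstd}, $\norm{T}_{\cb}\le1$ and the coefficient formula give $1\ge N^{-1}\sum_{|S|=2t}\alpha_S\widehat p(S)=N^{-1}\norm{\widehat p_{2t}}_{\ell_1}$, which proves the theorem.
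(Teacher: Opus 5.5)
Your plan is essentially the paper's proof. You inject at a slot $j_0$ with $\tilde t_{j_0}=t_{\max}$, use set-adding partial isometries on the slots after $j_0$ and set-removing ones before it, and give $X_{j_0}(I)$ entries $\alpha_{L\cup S_I\cup R}/N$ between $e_R$ and $f_L$. Its Hilbert--Schmidt (your Cauchy--Schwarz) bound lets you take $N^2=\binom{n-t_{\max}}{|R|}\binom{n-t_{\max}-|R|}{|L|}\le 2^{2t-t_{\max}}\binom{n}{2t-t_{\max}}$, which gives the stated bound.

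Two points to fix when writing it out:
\begin{enumerate}
\item The rank-one column estimate of \cref{theo:optFGstd} does not control this middle operator. It has one column for each $R$, so the Frobenius bound is the estimate that is actually needed.
\item You must not restrict to increasing tuples. $T$ need not be symmetric within a block, and every ordering $\ind I$ with $S_{\ind I}=S$ has to contribute to recover $\widehat p(S)=\sum_{S_{\ind I}=S}T_{\ind I}$. Your definition of $X_s(\ind k)$, which depends only on the set $\{\ind k\}$, already ensures this.
\end{enumerate}
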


As before, we will need the following lemma to relate the Fourier coefficients of the polynomial $p$ to those of the multilinear form $T$ defined in \cref{theo:cbmethodboundedadapt} and whose proof is omitted because it is analogous to that of \cref{lemma:2}.

\begin{lemma}\label{lemma:2adaptive}
Let  $T:\mathbb R^{2nt_1}\times\dots\times \mathbb R^{2nt_N}\to \R$ be an $N$-linear form and let $p(x)=T((x,1^n)^{\otimes t_1},\dots,(x,1^n)^{\otimes t_N})$ for every $x\in\{-1,1\}^n$. Then, for every $S\subseteq [n]$, we have that 
    \begin{equation*}
        \widehat p(S)= \sum_{\ind I:S_{\ind I}=S}T_{\ind I},
    \end{equation*}
    where ${\ind I}=(I_1, \dots, I_{N})\in [2n]^{t_1+\dots+t_{N}} $. 
    \end{lemma}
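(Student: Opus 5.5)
The plan is to copy the proof of \cref{lemma:2} and reduce to monomials. By linearity in $T$, it suffices to treat the case where $T$ is a single monomial form. We write the coordinates of the $k$-th argument $\R^{(2n)^{t_k}}$ as tuples $I_k=(I_k(1),\dots,I_k(t_k))\in[2n]^{t_k}$, using the tensor-product indexing implicit in $(x,1^n)^{\otimes t_k}$. Then $T$ is the form
\[
y_1(I_1)\cdots y_N(I_N)
\]
for a fixed multi-index $\ind I=(I_1,\dots,I_N)$, which we regard as an element of $[2n]^{t_1+\dots+t_N}$ by concatenation.

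The key observation is how the tensor power evaluates. Writing $z=(x,1^n)\in\R^{2n}$, the entry of $z^{\otimes t_k}$ at $I_k$ is $\prod_{j=1}^{t_k} z(I_k(j))$. This holds because $z^{\otimes t_k}$ is, by the paper's convention, the diagonal of $\Diag(z)^{\otimes t_k}$, and that diagonal is exactly the Kronecker product of the coordinates. Each factor $z(i)$ equals $x_i$ if $i\le n$ and $1$ if $i>n$. So the monomial evaluated at $((x,1^n)^{\otimes t_1},\dots,(x,1^n)^{\otimes t_N})$ equals $\prod_{i\in\ind I,\, i\le n} x_i$, where the product runs over all entries with multiplicity.

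Since $x_i^2=1$ on the Boolean cube, this product reduces to $\chi_{S_{\ind I}}(x)$, where $S_{\ind I}$ is the set of indices in $[n]$ appearing an odd number of times in the concatenated $\ind I$. Hence the restricted monomial has Fourier coefficient $1$ at $S_{\ind I}$ and $0$ elsewhere. Summing over $\ind I$ with weights $T_{\ind I}$ then gives $\widehat p(S)=\sum_{\ind I:S_{\ind I}=S}T_{\ind I}$, by uniqueness of the Fourier expansion.

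There is no real obstacle here. The only point needing care is bookkeeping: the indexing of $\R^{(2n)^{t_k}}$ (written $\R^{2nt_k}$ in the statement of \cref{theo:cbmethodboundedadapt}) must be fixed as $[2n]^{t_k}$, so that each coordinate of the tensor power factorizes as a product of the $z(i)$, consistent with the definition of $x^{\otimes t}$ in the Notation paragraph.
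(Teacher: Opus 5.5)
Your proposal is correct and matches the paper's approach: the paper omits this proof as analogous to \cref{lemma:2}, which reduces by linearity to a single monomial form and observes that its restriction to the cube is $\chi_{S_{\ind I}}$. Your extra care about indexing the $k$-th argument by $[2n]^{t_k}$, so that the coordinates of $(x,1^n)^{\otimes t_k}$ factor as products, is exactly the bookkeeping the paper leaves implicit.
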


We are now ready to prove Theorem \ref{prop:fgrowthadaptivequeriesfgrowth}.
\begin{proof}
We define 
    \begin{equation*}
        \tilde t_i=\left\{\begin{array}{ll}
           t_i  & \text{for }1\leq i\leq r, \\
           t_{2r+1-i}  & \text{for } r+1\leq i\leq 2r.
        \end{array}\right.
\end{equation*}
    Let $T:\R^{2n\widetilde t_1}\times \dots\times \R^{2n\widetilde t_{2r}}\to \R$ be the completely bounded $2r$-linear form of \cref{theo:cbmethodboundedadapt}, which satisfies
    \begin{equation*}\label{eq:prestriction}
        p(x)=T((x,1^n)^{\otimes \widetilde{t_1}},\dots,(x,1^n)^{\otimes \widetilde{t_{2r}}}),
    \end{equation*}
    for every $x\in\{-1,1\}^n.$ 
    
    Now, we define a tuple of matrices with bounded operator norm and evaluate $T$ on them in a way that, combining \cref{lemma:2adaptive,theo:cbmethodboundedadapt}, we can obtain the desired bound. We introduce some notation before defining those matrices. Let $\alpha_S=\sign(\widehat p (S))$.
    Let $j_0=\text{argmax}_j t_j$. By the symmetry of $\widetilde t_j$, there are at least two maximizers $j_0.$ Let us also denote $$A:={n-\widetilde t_{j_0}\choose \widetilde t_{j_0+1}+\dots+\widetilde t_{2r}}, \hspace{0.3 cm}B={n-(\widetilde t_{j_0}+\dots+\widetilde t_{2r})\choose \widetilde t_1+\dots+\widetilde t_{j_0-1}}.$$
    Consider the  Hilbert space with orthonormal basis $$\{f_S:S\subseteq [n],\, |S|\leq \widetilde t_{1}+\dots+\widetilde t_{s_0-1}\}\cup \{e_S:S\subseteq [n],\, |S|\leq \widetilde t_{s_0+1}+\dots+\widetilde t_{2r}\}.$$
    We define matrices $X(i)$ acting on that space in the following way. 
   
    For $j_0+1\leq j\leq 2r$ and $I\in [2n]^{\tilde t_j}$ 
 \begin{align*}
 X_j(I)e_S=
\begin{cases}
   e_{S\cup S_{I}} & \text{if } |S_I|=\tilde t_j \text{  and  }S_{I}\cap  S=\emptyset,\\
  0  & \text{otherwise. } 
\end{cases}
    \end{align*}
    For $j=j_0$ and $I\in [2n]^{\tilde t_{j_0}}$ 
   \begin{align*}
X_{j_0}(I)e_S=
\begin{cases}
  \sum_{\substack{S':|S'|= \widetilde t_1+\dots+\widetilde t_{j_0-1},\\S'\cap(S\cup S_{I})=\emptyset}} \frac{\alpha_{S_I\cup S\cup S'}}{\sqrt{AB} }f_{S'} & \text{if } |S_I|=\tilde t_{j_0}, \, |S|= \sum_{j=j_0+1}^{2r}\widetilde t_{j},\, S_I\cap S=\emptyset,\\
  0 & \text{otherwise}.
\end{cases}
\end{align*}
For $1\leq j\leq j_0-1$ and $I\in [2n]^{\tilde t_j}$ 
 \begin{align*}
 X_j(I)f_S=
\begin{cases}
   f_{S\setminus S_{I}} & \text{if } |S_I|=\tilde t_j \text{  and  }S_{I}\subseteq   S,\\
  0  & \text{otherwise. } 
\end{cases}
    \end{align*}

First, we claim that $X_j(I)$ are contractions for every $1\leq j\leq 2r$ and $I\in [2n]^{\tilde t_j}$. This holds for the case $j \neq j_0$, as these matrices map any pair of distinct basis vectors either to zero or to another pair of distinct basis vectors. Let us check the case $j=j_0$. Given  $z=\sum_{S: |S|=\widetilde t_{s_0+1}+\dots+\widetilde t_{2r}}\lambda_S\, e_S$, we have 
\begin{align*}
X_{j_0}(I)z&=\sum_{\substack{S: |S|=\widetilde t_{s_0+1}+\dots+\widetilde t_{2r}\\S\cap S_I=\emptyset}}\lambda_S  \sum_{\substack{S':|S'|= \widetilde t_1+\dots+\widetilde t_{j_0-1},\\S'\cap(S\cup S_{I})=\emptyset}} \frac{\alpha_{S_I\cup S\cup S'}}{\sqrt{AB} }f_{S'} \\&=\sum_{\substack{S':|S'|= \widetilde t_1+\dots+\widetilde t_{j_0-1},\\S'\cap S_{I}=\emptyset}}\Big(\sum_{\substack{S: |S|=\widetilde t_{s_0+1}+\dots+\widetilde t_{2r}\\S\cap (S_I\cup S')=\emptyset}}\lambda_S   \frac{\alpha_{S_I\cup S\cup S'}}{\sqrt{AB} }\Big)f_{S'}.
\end{align*}
Hence, by Cauchy-Schwarz inequality
\begin{align*}\
\|X_{j_0}(I)z\|^2&=\sum_{\substack{S':|S'|= \widetilde t_1+\dots+\widetilde t_{j_0-1},\\S'\cap S_{I}=\emptyset}}\Big|\sum_{\substack{S: |S|=\widetilde t_{s_0+1}+\dots+\widetilde t_{2r}\\S\cap (S_I\cup S')=\emptyset}}\lambda_S   \frac{\alpha_{S_I\cup S\cup S'}}{\sqrt{AB} }\Big|^2\\&\leq \sum_{\substack{S':|S'|= \widetilde t_1+\dots+\widetilde t_{j_0-1},\\S'\cap S_{I}=\emptyset}}\sum_{\substack{S: |S|=\widetilde t_{s_0+1}+\dots+\widetilde t_{2r}\\S\cap (S_I\cup S')=\emptyset}}|\lambda_S|^2\sum_{\substack{S: |S|=\widetilde t_{s_0+1}+\dots+\widetilde t_{2r}\\S\cap (S_I\cup S')=\emptyset}}\frac{1}{AB}\\&\leq \|z\|^2 \sum_{\substack{S':|S'|= \widetilde t_1+\dots+\widetilde t_{j_0-1},\\S'\cap S_{I}=\emptyset}}\sum_{\substack{S: |S|=\widetilde t_{s_0+1}+\dots+\widetilde t_{2r}\\S\cap (S_I\cup S')=\emptyset}}\frac{1}{AB}=\|z\|^2,
\end{align*}where we have used that the number of different disjoint pairs of sets $(S,S')$ of sizes $(\widetilde t_0+\dots\widetilde t_{j_0-1},\widetilde t_{s_j+1}+\dots+\widetilde t_{2r})$ that do not contain any element of $S_I$ is precisely $AB$.

Next, one can verify that for every $\ind I=(I_1,\dots,I_{2r})\in ([2n]^{\widetilde t_1}\times \dots\times [2n]^{\widetilde t_{2r}})$ one has 
\begin{align*}
\langle f_\emptyset, X^1( I_1)\dots X^{2r}(I_{2r})e_\emptyset\rangle=
\begin{cases}
   \frac{\alpha_{S_\ind I}}{\sqrt{AB}} & \text{if } |S_{\ind I}|=2t,\\
  0  & \text{if } |S_{\ind I}|\neq 2t.
\end{cases}
    \end{align*}
    Hence, by \cref{theo:cbmethodboundedadapt}, we have that 
    \begin{equation*}
        1\geq \sum_{\ind I,\, |S_{\ind I}|=2t}T_{\ind I}\frac{\alpha_{S_\ind I}}{\sqrt{AB}}=\sum_{S:\, |S|=2t}\frac{\alpha_{S}}{\sqrt{AB}}\sum_{\ind I,\, S_{\ind I}=S}T_{\ind I}.
    \end{equation*}
    Substituting the values of $A$ and $B$ and applying \cref{lemma:2adaptive}, we obtain

    \begin{equation*}
        \sqrt{{{n-\widetilde t_{j_0}\choose \widetilde t_{j_0+1}+\dots+\widetilde t_{2r}}{n-(\widetilde t_{j_0}+\dots+\widetilde t_{2r})\choose \widetilde t_1+\dots+\widetilde t_{j_0-1}}
        }}\geq \sum_{S,\, |S|=2t} \alpha_S \widehat p(S)=\norm{\widehat p_{2t}}_1.
    \end{equation*}
    Finally, we can use the identity ${n \choose k}\leq (en/k)^k$, the fact that for every $a,b\in \N$ we have that $(1/a)^{a}(1/b)^{b}\leq (2/(a+b))^{a+b}$ and the equality $2t=\sum_{j=1}^{2r}\tilde{t}_{j}$, to write
    \begin{align*}
        &{{n-\widetilde t_{j_0}\choose \widetilde t_{j_0+1}+\dots+\widetilde t_{2r}}{n-(\widetilde t_{j_0}+\dots+\widetilde t_{2r})\choose \widetilde t_1+\dots+\widetilde t_{j_0-1}}
        }\leq {n\choose \widetilde t_{j_0+1}+\dots+\widetilde t_{2r}}{n\choose \widetilde t_1+\dots+\widetilde t_{j_0-1}}
        \\
        &\leq \left(\frac{en}{\widetilde t_{j_0+1}+\dots+\widetilde t_{2r}}\right)^{\widetilde t_{j_0+1}+\dots+\widetilde t_{2r}}\quad \cdot  \left(\frac{en}{\widetilde t_1+\dots+\widetilde t_{j_0-1}}\right)^{\widetilde t_1+\dots+\widetilde t_{j_0-1}}\\
        &\leq \left(\frac{2en}{2t-\widetilde t_{j_0}}\right)^{2t-\widetilde t_{j_0}}=\left(\frac{2en}{2t-t_{\max}}\right)^{2t-t_{\max}}.
    \end{align*}
    Hence, $$\norm{\widehat p_{2t}}_1\leq \left(\frac{2en}{2t-t_{\max}}\right)^{t-t_{\max}/2}.$$

\end{proof}

\section{Classical simulation of non-adaptive quantum query algorithms}
In this section we show that non-adaptive quantum query algorithms, as in \cref{fig:QQAnonadaptive}, can be efficiently simulated by classical non-adaptive query algorithms. The amplitudes of these algorithms are bounded linear forms. Crucially to our proof, we use the fact that for linear forms the $\ell_1$ norm of the tensor coefficients coincides with the completely bounded norm, which does not happen for $k$-linear forms for any $k\geq 2$. 

\begin{fact}\label{fact:boundedlinear}
    Let $T:\R^n\to\R$ be a linear form. Then, 
    \begin{equation*}
        \sup_{x\in\{-1,1\}^n}|\sum_{i\in [n]}T_i x(i)|= \sup_{m\in\N}\sup_{\substack{X(i)\in \R^{m\times m}:\, \norm{X(i)}\leq 1\\i=1,\ldots, n}}\|\sum_{i\in [n]}T_i X(i)\|=\sum_{i\in [n]}|T_{i}|.
    \end{equation*}
\end{fact}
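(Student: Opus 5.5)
We must prove a chain of inequalities: $\|T\|_\infty \le \|T\|_{\cb} \le \sum_i |T_i|$, and then show that the Boolean supremum already attains $\sum_i|T_i|$. This closes the chain and gives equality throughout.

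\emph{Boolean sup $\le$ cb sup.} Take $m=1$. Then the matrices $X(i)$ are scalars in $[-1,1]$. Restricting to $X(i)=x(i)\in\{-1,1\}$ shows that the Boolean supremum is at most the completely bounded supremum.

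\emph{cb sup $\le \ell_1$.} By the triangle inequality for the operator norm, for any $m$ and any contractions $X(i)$,
\[
\Big\|\sum_{i\in[n]} T_i X(i)\Big\|_{\op} \le \sum_{i\in[n]} |T_i|\,\|X(i)\|_{\op} \le \sum_{i\in[n]}|T_i|.
\]
Taking suprema gives the bound.

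\emph{$\ell_1$ is attained on the cube.} Choose $x(i)=\sign(T_i)$, with the convention $\sign(0)=1$. Then $\sum_i T_i x(i)=\sum_i |T_i|$. Hence the Boolean supremum is at least $\sum_i|T_i|$.

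Combining the three steps yields equality of all three quantities. There is no real obstacle here; the only point requiring care is noticing that scalars form a special case of the matrix supremum.
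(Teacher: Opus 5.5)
Your proof is correct and follows the same approach as the paper's: the Boolean supremum is bounded by the matrix supremum via $m=1$, the matrix supremum is bounded by $\sum_i|T_i|$ by the triangle inequality, and the choice $x(i)=\sign(T_i)$ attains $\sum_i|T_i|$ on the cube. Your convention $\sign(0)=1$ also makes sure the chosen $x$ really lies in $\{-1,1\}^n$, a point the paper's proof leaves implicit.
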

\begin{proof}
    The inequalities
    $\sup|\sum_{i\in [n]}T_i x(i)|\leq \sup|\sum_{i\in [n]}T_i X(i)|\leq \sum_{i\in [n]}|T_{i}|$ are always true, due to inclusion of the feasibility region and the triangle inequality. The inequality $\sum_{i\in [n]}|T_{i}|\leq \sup|\sum_{i\in [n]}T_i x(i)|$ is true because if $y(i)=\text{sgn} (T_i)$, we have that $\sum_{i\in [n]}|T_{i}|= \sum_{i\in [n]}T_iy(i).$
\end{proof}

Together with \cref{prop:cbmethodnonadaptive}, \cref{fact:boundedlinear} implies that the $\ell_1$-norm of the Fourier coefficients of the amplitudes of non-adaptive quantum query algorithms are bounded. This corresponds to \cref{prop:ell1boundamplitudes}, which we restate for the reader's convenience. 

\propell*

\begin{proof}
    By \cref{prop:cbmethodnonadaptive} and \cref{fact:boundedlinear}, there exists a linear form $T:\R^{(2n)^t}\to \R$ such that
    \begin{equation*}
        T((x,1^n),\ldots, (x,1^n))=a(x)
    \end{equation*}
    for every $x\in\{-1,1\}^n$ and $\sum_{\ind i\in [2n]^t}|T_{\ind i}|=\|T\|_{\cb}\leq 1$.
    
    Using (cf. Lemma \ref{lemma:2adaptive}) that for every $S\subseteq [n],$ we have that 
    \begin{equation*}
        \widehat a(S)=\sum_{\ind i:\, S_{\ind i}=S}T_{\ind i},
    \end{equation*} we deduce that $$\sum_S|\widehat a(S)|=\sum_S|\sum_{\ind i:\, S_{\ind i}=S}T_{\ind i}|\leq \sum_{\ind i\in [2n]^t}|T_{\ind i}|\leq 1.$$
\end{proof}

Now, we are ready to prove the simulation result for non-adaptive quantum query algorithms that measure all qubits and only accepts on $r$ of the measurements outcomes. This result is \cref{cor:simnonadapt}, which we restate for the reader's convenience. 

\corsimnonadapt*

\begin{proof}
     Such a $p$ can be written as sum of $r$ polynomials $a_i^2$, with $a_i$ as in \cref{prop:ell1boundamplitudes}, and consequently, satisfying $\|\widehat{a_i}\|_{\ell_1}\leq 1$. This implies that
     \begin{align*}
         \|\widehat{p}\|_{\ell_1}&\leq \sum_{i=1}^r \|\widehat{(a_i)^2}\|_{\ell_1} = \sum_{i=1}^r \|\widehat{a_i}*\widehat{a_i}\|_{\ell_1}\leq \sum_{i=1}^r \|\widehat{a_i}\|^2_{\ell_1} \leq r.
     \end{align*}

We now make use of the following simulation result.
\begin{claim}
\label{lemma:simulation}
   Let $p:\{-1,1\}^n\to \R$ be a polynomial of degree $t$. Then, for every $\varepsilon,\delta>0,$ there exists a non-adaptive randomized classical algorithm that makes only $$O(t\log(1/\delta)/\varepsilon^2)$$ queries, and that, for every input $x \in \{-1,1\}^n$, outputs an estimate $\widetilde p (x)$ that, with probability $\geq 1-\delta$, satisfies $|\widetilde{p}(x)-p(x)|\leq \varepsilon \|\widehat{p}\|_1$.
\end{claim}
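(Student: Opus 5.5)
We prove the claim by importance sampling on the Fourier spectrum, as in the proof sketch in the introduction. If $\|\widehat p\|_{\ell_1}=0$ there is nothing to prove: $p\equiv 0$ and we output $0$ without querying. Otherwise set $L:=\|\widehat p\|_{\ell_1}$ and define a probability distribution $\mu$ on subsets $S\subseteq[n]$ with $\widehat p(S)\neq 0$ by $\mu(S)=|\widehat p(S)|/L$. Since $p$ has degree $t$, every $S$ in the support of $\mu$ has $|S|\leq t$. Crucially, $\mu$ depends only on $p$ and not on the input $x$.

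The algorithm first draws $S_1,\dots,S_N$ independently from $\mu$, with $N=\lceil 2\log(2/\delta)/\eps^2\rceil$. All these draws happen before any query is made. It then queries every coordinate in $S_1\cup\dots\cup S_N$, which is at most $tN=O(t\log(1/\delta)/\eps^2)$ queries, and the choice of queries does not depend on any answer, so the algorithm is non-adaptive. Finally it outputs
\[
\widetilde p(x)=\frac{L}{N}\sum_{k=1}^N \sign(\widehat p(S_k))\,\chi_{S_k}(x),
\]
which it can compute from the queried coordinates.

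For the analysis, fix $x$ and let $Z_k=L\,\sign(\widehat p(S_k))\chi_{S_k}(x)$. These are i.i.d.\ random variables with values in $[-L,L]$, and
\[
\E[Z_k]=\sum_S \frac{|\widehat p(S)|}{L}\,L\,\sign(\widehat p(S))\chi_S(x)=p(x).
\]
Hoeffding's inequality then gives
\[
\Pr\big[|\widetilde p(x)-p(x)|>\eps L\big]\leq 2\exp\!\left(-\frac{2N^2\eps^2L^2}{N(2L)^2}\right)=2\exp(-N\eps^2/2)\leq\delta
\]
for our choice of $N$, which is the claimed guarantee.

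There is no real obstacle. The only points needing care are the factor of $L$ in the range of $Z_k$ when applying Hoeffding, which is why the error bound is $\eps\|\widehat p\|_{\ell_1}$, and the observation that sampling all $S_k$ before querying is what makes the algorithm non-adaptive. Applying the claim with $\eps/r$ in place of $\eps$ and using $\|\widehat p\|_{\ell_1}\leq r$ gives \cref{cor:simnonadapt}.
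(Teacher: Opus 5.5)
Your proof is correct and follows the paper's argument: sample Fourier sets from the normalized absolute spectrum before making any queries, evaluate the sampled characters with at most $tN$ non-adaptive queries, and apply a Hoeffding bound. The only difference is cosmetic: you normalize by $L=\|\widehat p\|_{\ell_1}$ directly, whereas the paper rescales so that $\|\widehat p\|_{\ell_1}\le 1$ and places the leftover mass $1-\|\widehat p\|_{\ell_1}$ on a dummy outcome $\perp$.
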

\begin{claimproof}
     Assume $\norm{\widehat p}_{\ell_1}\leq 1$, and the general case follows by rescaling. We define the following probability $q$ on the finite set $\{\perp\}\cup\mathcal P([n])$:
    $$ q(S):=\left\{\begin{array}{ll}
        |\widehat p(S)| & \text{if }S\subseteq [n], \\
        1-\norm{\widehat p}_{\ell_1} & \text{if }S=\perp.
    \end{array}\right.$$ 
    The classical algorithm starts by sampling $T=O(\log(1/\delta)/\varepsilon^2)$ elements $S_1,\dots,S_T$ according to $q,$ and we define 
    \begin{equation*}
        \widetilde p(x):=\frac{1}{T}\sum_{r\in [T]:S_r\subseteq [n]}\sign(\widehat p(S_r))\chi_{S_r}(x).
    \end{equation*} 
    The algorithm finishes by evaluating $\widetilde p(x)$. To do that, as $\deg(p)\leq t$, we just need to query $t\cdot T$ entries of $x.$ Furthermore, we can make all these queries at the same time, without making adaptive choices. 
    
    Finally, we claim that $|\widetilde p(x)-p(x)|\leq \eps$ with probability $\geq 1-\delta$. Indeed, we define the random variable 
    \begin{equation*}
        P_{x}(S):=\left\{\begin{array}{ll}
        \sign(\widehat p(S))\chi_{S}(x) & \text{if }S\subseteq [n], \\
        0 & \text{if }S=\perp.
    \end{array}\right.
    \end{equation*}
    Note that we have that $\mathbb E_{q}[P_{x}]=p(x)$ and that $P_{x}$ takes values in~$[-1,1]$. 
    Then, by a Hoeffding bound, we have that $|\widetilde p(x)-p(x)|\leq \eps$ with probability $1-\delta$, as desired. 
    \end{claimproof}
    To finish the proof of \cref{cor:simnonadapt}, we set $\varepsilon=\varepsilon/r$ in Claim \ref{lemma:simulation}.
\end{proof}

\textbf{Acknowledgements.} We acknowledge the use of ChatGPT Pro in the proof of \cref{theo:simfromsqrtinf}, which was proved in conversations where we asked how to use \cref{eq:goal2} to prove a junta theorem similar to the Friedgut junta theorem~\cite{friedgut1998boolean}. Claude was used for language editing. 
\bibliographystyle{alpha}
\bibliography{Bibliography}
\end{document}